\newtheorem{theorem}{Theorem}
\newcommand{\lb}{\left(}
\newcommand{\rb}{\right)}
\newcommand{\lsqb}{\left[}
\newcommand{\rsqb}{\right]}
\newcommand{\xbold}{\mathbf{x}}
\newcommand{\ybold}{\mathbf{y}}
\newcommand{\sbold}{\mathbf{s}}
\newcommand{\zbold}{\mathbf{z}}
\newcommand{\coopsignaltwo}{\mathbf{v}_{21}}
\newcommand{\vtwoone}{\mathbf{v}_{21}}
\newcommand{\SNRt}{\text{SNR}}
\newcommand{\INRt}{\text{INR}}
\begin{document}
\title{Outer Bounds on the Secrecy Capacity Region of the $2$-user Z Interference Channel With Unidirectional Transmitter Cooperation}
\author{\authorblockN{Parthajit~Mohapatra$^{*}$,~Chandra R. Murthy$^{\ddag}$, and ~Jemin~Lee$^*$}\\
	\authorblockA{$^*$iTrust, Centre for Research in Cyber Security, Singapore University of Technology and Design, Singapore\\
		$^\ddag$Department of ECE, Indian Institute of Science, Bangalore\\
		Email: \{parthajit,~jemin\_lee\}@sutd.edu.sg, cmurthy@ece.iisc.ernet.in}}
\maketitle
\begin{abstract}
This paper derives outer bounds on the secrecy capacity region of the $2$-user Z interference channel (Z-IC) with rate-limited unidirectional cooperation between the transmitters. First, the model is studied under the linear deterministic setting. The derivation of the outer bounds on the secrecy capacity region involves careful selection of the side information to be provided to the receivers and using the secrecy constraints at the receivers in a judicious manner. To this end, a novel partitioning of the encoded messages and outputs is proposed for the deterministic model based on the strength of interference and signal. The obtained outer bounds are shown to be tight using the achievable scheme derived by the authors in a previous work. Using the insight obtained from the deterministic model, outer bounds on the secrecy capacity region of the $2$-user Gaussian Z-IC are obtained. The equivalence between the outer bounds for both the models is also established. It is also shown that secrecy constraint at the receiver does not hurt the capacity region of the $2$-user Z-IC for the deterministic model in the weak/moderate interference regime. On the other hand, the outer bounds developed for the Gaussian case shows that secrecy constraint at the receiver can reduce the capacity region for the weak/moderate interference regime. The study of the relative performance of these bounds reveals insight into the fundamental limits of the $2$-user Z-IC with limited rate transmitter cooperation.
\end{abstract}
\section{Introduction}
Interference is one of the primary factors in limiting the performance of wireless communication systems. Users are also susceptible to eavesdropping, due to the broadcast nature of the wireless medium. One way to tackle both these issues is through cooperation between the legitimate users. However, the effect of cooperation on secure communication in interference limited scenarios is not well understood. Such cooperation can affect the performance limits of the system in a completely different way compared to communication systems where reliable communication is the sole aim \cite{wang-TIT-2011, wang2, partha-arxiv-2014, partha-spawc-2013}. In this work, outer bounds on the secrecy capacity region are developed for the $2$-user Z interference channel (Z-IC) with rate-limited unidirectional cooperation between the transmitters.

An important information theoretic channel model, and the one investigated in this paper, is the Z-IC with unidirectional, rate-limited transmitter cooperation and secrecy constraints at the receivers~\cite{liu-globecom-2004,liu-icst-2011}.  Practically, it models, for example, a 2-tier network, where the macro cell user is close to the edge of the femtocell  while the femtocell user is close to the femto base station (BS). Since the macro BS can typically support higher complexity transmission schemes, it could use the side information received from the femto BS to precode its data to improve its own rate and simultaneously ensure secrecy at the femtocell user. At the receivers, the macro cell user could experience significant interference from the femtocell BS, while the femtocell user sees little or no interference from the macro BS, leading to the Z channel as the appropriate model for the system. Thus, the developed bounds give useful insights on the fundamental limits of communication. 
\subsection{Prior works}
The capacity region of the IC has remained an open problem, even without secrecy constraints at the receivers, except for some specific cases \cite{carleial-TIT-1975, sato-TIT-1981}. In \cite{carleial-TIT-1975}, it is shown that rate as high as that achievable without the interference can be achieved in $2$-user IC, when the interfering links are sufficiently stronger than the direct links. In \cite{sato-TIT-1981}, the capacity region of the IC is characterized for the strong interference regime. The IC with secrecy constraints has been analyzed in \cite{liu-TIT-2008,  lgamal2-TIT-2011}.  In \cite{tang-TIT-2011}, it is shown that a nonzero secrecy rate can be achieved even when the eavesdropper has a better channel compared to the legitimate receiver, in case of the wiretap channel with a helping interferer. This work also proposed computable outer bounds on the secrecy rate, where the tightest outer bound depends on the channel conditions.

It has been shown that limited rate cooperation between users can
improve the rates significantly in case of IC~\cite{wang-TIT-2011, wang2, vinod1}, when reliable communication is the sole aim. In \cite{wang-TIT-2011} and  \cite{vinod1} outer bounds on the capacity region of the $2$-user Gaussian IC are obtained when the transmitters can cooperate through a lossless link of finite capacity and a noisy link, respectively. In \cite{wang2}, outer bounds on the capacity region of the $2$-user Gaussian IC are obtained when the receivers can cooperate through a lossless link of finite capacity. However, obtaining outer bounds on the capacity region with transmitter cooperation is harder than obtaining bounds under receiver cooperation, because, in the former case, the encoded messages are dependent due to transmitter cooperation. In \cite{partha-arxiv-2014, partha-ncc-2014}, outer bounds on the secrecy capacity region of the $2$-user linear deterministic model and Gaussian symmetric IC are developed, where transmitters can cooperate through a lossless link of finite capacity. In \cite{geng-globecom-2016}, a tighter outer bound is obtained for the linear deterministic model with secrecy constraint and without transmitter cooperation for the initial part of the moderate interference regime.  Outer bounds on the secrecy capacity region for other communication models under different assumptions of cooperation can be found in \cite{ekrem2, ekrem1, awan1, peng-IFS-2013}.

The Z-IC model has also been studied in the existing literature with and without secrecy
constraints \cite{liu-globecom-2004, liu-TIT-2009, li-isit-2008}. In \cite{liu-globecom-2004},  outer bounds on the capacity region of
the Gaussian Z-IC for the weak/moderate interference regimes are derived when there is no secrecy constraints at the receivers. In
\cite{liu-TIT-2009}, for a special class of Z-IC,  the capacity region is established. In \cite{li-isit-2008}, for the
weak/moderate interference regime, the outer bounds on the secrecy capacity region of the $2$-user Z-IC without transmitter cooperation is shown to be tight
for the deterministic case. The outer bounds on the capacity region of the Z-IC with transmitter/receiver cooperation and without the secrecy constraint have been obtained in~\cite{bagheri-arxiv-2010, lei-TIT-2012, do-allerton-2009}. In \cite{bagheri-arxiv-2010}, both the encoders can cooperate through noiseless
links with finite capacities and the outer bounds developed helps to establish the sum capacity of the channel within $2$ bits per channel use. The role of cooperation between
the receivers is investigated in \cite{lei-TIT-2012, do-allerton-2009}. However, outer bounds on the secrecy capacity region of the $2$-user Z-IC with unidirectional transmitter cooperation have not been addressed in the existing literature. Deriving such bounds can offer key insight into the fundamental limits of secure communication in the Z-IC with unidirectional limited rate transmitter cooperation, and is therefore the focus of this work.
\subsection{Contributions}
This work considers unidirectional transmitter cooperation in the form of a rate-limited lossless link from transmitter~$2$ (which causes interference) to transmitter~$1$ (which does not cause
interference), and with secrecy constraints at receivers.  The objective of this paper is to derive outer bounds on the secrecy capacity region of the $2$-user Z-IC with unidirectional transmitter cooperation and secrecy constraints at the receivers. This, in turn, requires judicious use of the secrecy constraint at receiver, along with careful selection of the side information to be provided to the receivers. In particular,  the cooperation between the transmitters makes the encoded messages dependent, which makes derivation of the outer bounds even more difficult.

First, the problem is solved under the deterministic approximation of the channel. The study of the deterministic model gives useful insights, and motivates the
outer bounds in the Gaussian setting. However, it is non-trivial to extend the results obtained for the deterministic case to the Gaussian setting due to the well known differences between the two models \cite{avesti1}. The main contributions of the paper are summarized below:
\begin{enumerate}
 \item The key novelty in deriving outer bounds on the secrecy capacity region for the deterministic model is the choice of side information to be provided to the receiver(s) and the judicious use of the secrecy constraints at the receivers. To elaborate, a novel partitioning of the encoded messages and outputs is proposed for the deterministic model based on the strength of interference and signal. This partitioning helps to bound or simplify the entropy terms that are difficult to evaluate due to the dependence between the encoded messages.

 \item Outer bounds are developed for the Gaussian case by providing appropriate side information and bounding the entropy terms containing both discrete and continuous random variables, based on the insights obtained for the deterministic case (Sec.~\ref{sec:outer-gaussian}).  The outer bounds derived on the secrecy capacity region of the Gaussian Z-IC are the best known outer bounds till date with unidirectional transmitter cooperation.
  
  \item  The outer bounds on the secrecy capacity region of the $2$-user Z-IC without
  cooperation between the transmitters can be obtained as special case of the analysis for both the
  models. Note that, prior to this work, the capacity region of the Z-IC  for the deterministic model with secrecy constraints was
  not fully known even for the non-cooperating case~\cite{li-isit-2008}.
\end{enumerate}

The outer bounds on the secrecy capacity region of the $2$-user Z-IC for the deterministic model does not use the secrecy constraints at the receivers in the weak/moderate interference regime. It is found that the achievable results obtained for this model in \cite{partha-isit-2015} matches with the outer bounds derived in this work for the weak/moderate interference regime. Hence, there is no penalty on the capacity region of the Z-IC due to the secrecy constraints at the receivers in the weak/moderate interference regimes.  However, in the very high interference regime, irrespective of the capacity of the cooperative link, the outer bounds developed for the deterministic model shows that user~$2$ cannot achieve any nonzero secrecy rate. On the other hand, the outer bounds developed for the Gaussian case show that secrecy constraint can reduce the capacity region of the Z-IC in all the interference regimes. Part of this work has appeared in \cite{partha-isit-2015}.

\textit{Notation:} Lower case or upper case letters represent scalars, lower case
boldface letters represent vectors, upper case boldface letters represent
matrices, $(x)^{+} \triangleq \max\{0 ,x\}$ and $\lfloor . \rfloor$ denotes the floor operation. 

\textit{Organization:} Section~\ref{sec:system_model} presents the system model.
In Secs.~\ref{sec:outer-dermin}~and~\ref{sec:outer-gaussian}, the outer bounds for the deterministic and Gaussian models are presented, respectively.
In Sec.~\ref{sec:results}, some numerical examples are presented to offer a
deeper insight into the bounds. Concluding remarks are offered in
Sec.~\ref{sec:conclusion}; and the proofs of the theorems are provided in the Appendices.
\section{System Model}\label{sec:system_model}
\begin{figure}[t]
\centering
\mbox{\subfigure[Gaussian model]{\includegraphics[width=1.6in, height = 1.6in]{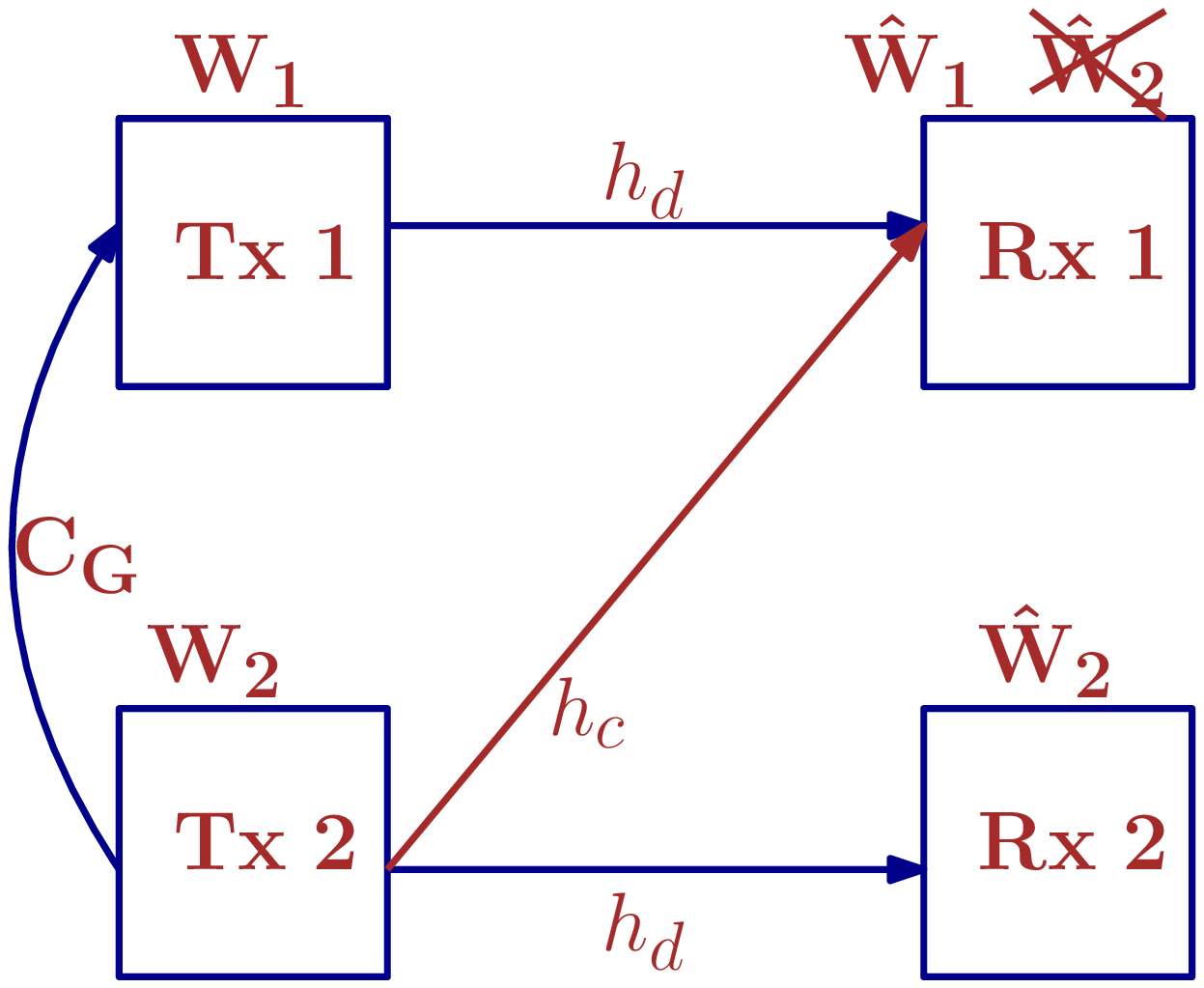} \label{fig:gaussian_model}}\quad
\subfigure[Deterministic model]{\includegraphics[width=1.6in, height= 1.6in]{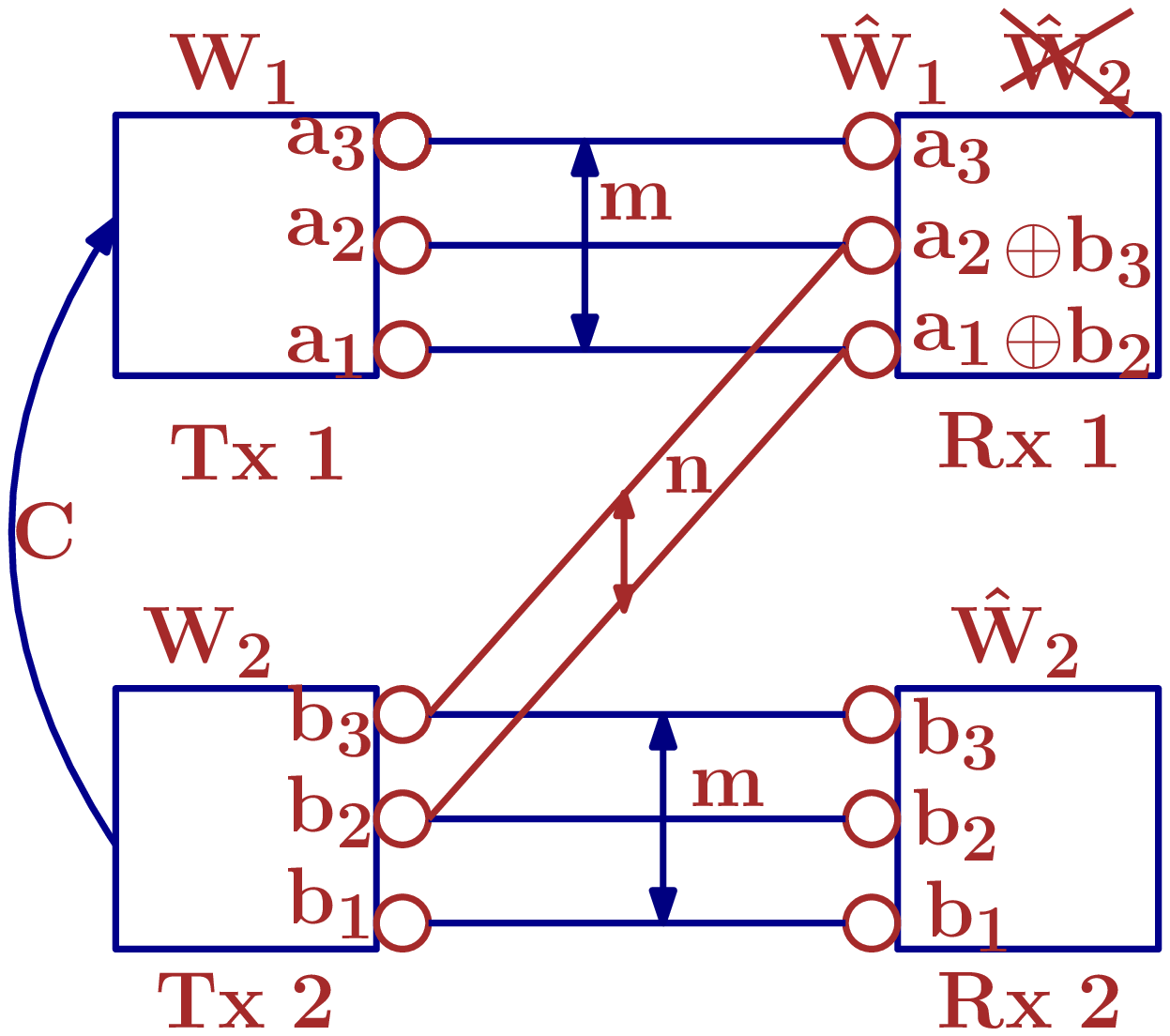} \label{fig:deterministic_model}}}
\caption[]{$2$-user Z-IC with unidirectional transmitter cooperation (from transmitter~$2$ to transmitter~$1$).}\label{fig:outersplit}
\end{figure}
Consider a $2$-user Gaussian symmetric Z-IC  with unidirectional and rate-limited transmitter
cooperation from transmitter~$2$ to $1$, as shown in Fig.~\ref{fig:gaussian_model}.\footnote{The model is
termed as symmetric since the links from transmitter~$1$ to receiver~$1$  and from transmitter~$2$ to receiver~$2$
are of the same strength.} In the Z-IC, only one of the users (i.e., transmitter~$2$) causes interference to the unintended receiver (i.e., receiver~$1$). The received signal at
receiver~$i$, $\ybold_i$, is given by
\begin{align}
y_{1} = h_d x_1 + h_c x_{2} + z_1;  y_2 = h_dx_2 + z_2, \label{sysmodel1}
\end{align}
where $z_j$ $(j=1,2)$ is the additive white Gaussian noise, distributed as $\mathcal{N}(0,1)$. Here, $h_d$ and $h_c$ are the channel gains
of the direct and interfering links, respectively. The input signals ($x_i$)
are required to satisfy the power constraint: $E[|x_i|^2] \leq P$.  The transmitter~$2$ cooperates with transmitter~$1$ through a noiseless and secure link of finite rate denoted by $C_G$.

The equivalent deterministic model of \eqref{sysmodel1} at high SNR is given by~\cite{li-isit-2008,wang-TIT-2011}
\begin{align}
\mathbf{y}_{1} = \mathbf{D}^{q-m}\mathbf{x}_{1} \oplus \mathbf{D}^{q-n}\mathbf{x}_{2}; \quad  \mathbf{y}_{2} = \mathbf{D}^{q-m}\mathbf{x}_{2}, \label{sysmodel2}
\end{align}
where $\mathbf{x}_{1}$ ($\mathbf{x}_{2}$) is the binary input vector of the deterministic Z-IC from user~$1$
(user~$2$) of length $m$ ($\max\{m,n\}$); $\mathbf{y}_{1}$ $(\mathbf{y}_{2})$ is the binary output vector of
length $\max\{m,n\}$ ($m$);  $\mathbf{D}$ is a $q \times q$ downshift matrix with elements $d_{j',j''}=1$ if
$2 \leq j'=j''+1\leq q$ and $d_{j',j''}=0$ otherwise; and the operator $\oplus $ stands for  modulo-$2$ addition,
i.e., the \textsf{XOR} operation. The deterministic model is also shown in Fig.~\ref{fig:deterministic_model}.

The deterministic model is a first order approximation of a Gaussian channel, where all the signals are represented
by their binary expansions. Here, noise is modeled by truncation, and the superposition of signals at the
 receiver is modeled by \textit{modulo}~$2$ addition. Hence, the parameters $m$, $n$, and $C$ of the deterministic 
 model are related to the Gaussian symmetric Z-IC as $ m = (\lfloor 0.5 \log \text{\textsf{SNR}}\rfloor)^{+},\: n = (\lfloor 0.5 \log \text{\textsf{INR}}\rfloor)^{+},$ 
 and $C = \lfloor C_G\rfloor$. Note that the
 notation followed for the deterministic model is the same as that presented in \cite{wang-TIT-2011}. The
 bits $a_{i} \in \mathcal{F}_{2}$ and $b_i \in \mathcal{F}_{2}$ denote the  information bits of transmitters $1$ and $2$, respectively, sent
 on the $i^{\text{th}}$ level, with the levels numbered starting from the bottom-most entry. 

The transmitter~$i$ has a message $W_{i}$, which should be decodable at the intended receiver~$i$,
but needs to be kept secret from the other, i.e., the unintended receiver $j$ ($j \neq i$), and this is termed as the \emph{secrecy constraint}. Note that, for the Z-IC, the message $W_1$ is secure as there is no link from transmitter~$1$ to receiver~$2$. Hence, the goal is to ensure that $W_2$ is not decodable at receiver~$1$. The encoding at  
transmitter~$1$ should satisfy the causality constraint, i.e., it cannot depend 
on the signal to be sent over the cooperative link in the future. The signal sent over the cooperative link from
transmitter~$2$ to transmitter~$1$ is represented by $\mathbf{v}_{21}$. It is
also assumed that the transmitters trust each other completely and they do not deviate from the agreed
schemes, for both the models. For both the models, the encoded message at transmitter~$1$ is a function of its own message, the signal received over the cooperative link and possibly some artificial noise or jamming signal, whereas, the encoded message at transmitter~$2$ is independent of the other user's message.  In the derivation of the outer bounds for the deterministic and Gaussian models, the notion of weak secrecy is considered, i.e., $\frac{1}{N} I(W_2; \ybold_1^N) \rightarrow 0$ as $N \rightarrow \infty$, where $N$ corresponds to the block length~\cite{wyner-bell-1975}.

The following interference regimes are considered: weak/moderate interference regime $(0 \leq \alpha \leq 1)$, high interference regime $(1 < \alpha \leq 2)$
and very high interference regime $(\alpha > 2)$, where, with slight abuse of notation $\alpha \triangleq
\frac{n}{m}$ is used for the deterministic model and $\alpha \triangleq \frac{\log \INRt}{\log \SNRt}$ is used for the
Gaussian model. The quantity $\alpha$ captures the amount of coupling between the
signal and interference.
\section{Outer Bounds for the Linear Deterministic Z-IC Model}\label{sec:outer-dermin}
In this section, outer bounds on the secrecy capacity region for the linear deterministic Z-IC with
unidirectional transmitter cooperation are presented for the different interference regimes as 
Theorems~\ref{th:theorem-weakmod-outer1}-\ref{theorem-veryhigh-outer1}. Note that in all interference regimes, the rate of both the users can be trivially upper bounded by $m$, i.e., $R_1 \leq m$ and $R_2 \leq m$. One of the key techniques used in deriving tight outer bounds is to partition the encoded
 message, output, or both, depending on the value of $\alpha$.  The partitioning of the encoded messages/outputs gives insights on the side information to be provided to the receiver. This in turn allows one to exploit the secrecy constraint at the receiver to obtain tight and tractable outer bounds on the secrecy capacity region of the Z-IC. This partitioning also helps to simplify the entropy terms as the encoded messages at the transmitters are not independent due to the cooperation between the transmitters. 
 
 The following Markov relation is used in the derivation of these outer bounds: conditioned
 on the cooperating signal $(\coopsignaltwo^N)$, the encoded signals and the messages at the
 two transmitters are independent \cite{willems-TIT-1983, wang-TIT-2011}, i.e.,
\begin{align}
& (W_1, \xbold_1^N) \rightarrow (\vtwoone^N) \rightarrow (W_2,\xbold_2^N). \label{eq:usefulrelation}
\end{align}
For the derivation of the first outer bound  in the weak/moderate interference regime, the encoded message $\xbold_1$ is
partitioned into two parts: one part ($\xbold_{1a}$), which is received without interference at receiver~$1$, and
another part ($\xbold_{1b}$), which is received with interference at receiver~$1$. The encoded message
of transmitter~$2$ is also split into two parts: one part ($\xbold_{2a}$), which causes interference to receiver~$1$,
and another part ($\xbold_{2b}$), which does not cause any interference to receiver~$1$. 
The partitioning of the output and the encoded message is shown in Fig.~\ref{fig:weakmodouter}.  In the derivation of this outer bound, the secrecy constraints at the receivers are not used. To get insights on this, consider the following two cases in Fig.~\ref{fig:ach_scheme}. In Fig.~\ref{fig:wekmod_ach_case1}, it can be noticed that user~$1$ can transmit $m$ bits securely as there is no link from transmitter~$1$ to receiver~$2$. Hence, user~$1$ can achieve the maximum rate of $m$. On the other hand, user~$2$ can transmit on the lower levels $[1:m-n]$ and it can send $m-n$ bits securely. Hence, the rate point $(m, m-n)$ is achievable. In Fig.~\ref{fig:weakmod_ach_case2}, user~$2$ sends data bits on the levels $[1:m]$.  As the data bits sent on the levels $[m-n+1:m]$ are received at receiver~$1$, transmitter~$1$ sends random bits 
generated from $\mathcal{B}(\frac{1}{2})$ distribution on the levels $[1:n]$ to ensure secrecy of the data bits of transmitter~$2$ at receiver~$1$. Hence, user~$2$ can also achieve the maximum possible rate of $m$. On the remaining 
levels $[n+1:m]$, transmitter~$1$ sends its own data bits. As transmitter~$2$ does not cause any interference to the data bits sent on these levels by transmitter~$1$, receiver~$1$ can decode these data bits. Hence, transmitter~$1$ achieves a
rate of $m-n$, and the rate point $(m-n, m)$ is achievable. It is not difficult to see that, even if there is no secrecy constraint at the receivers, it is not possible to achieve rates exceeding the corner points $(m, m-n)$ and $(m-n, m)$. This motivates one to derive the outer bounds on the capacity region without using the secrecy constraints at the receivers.
\begin{figure}[t]
\centering
\mbox{\subfigure[][$(m,n)=(5,3)$]{\includegraphics[width=1.6in,height=1.7in]{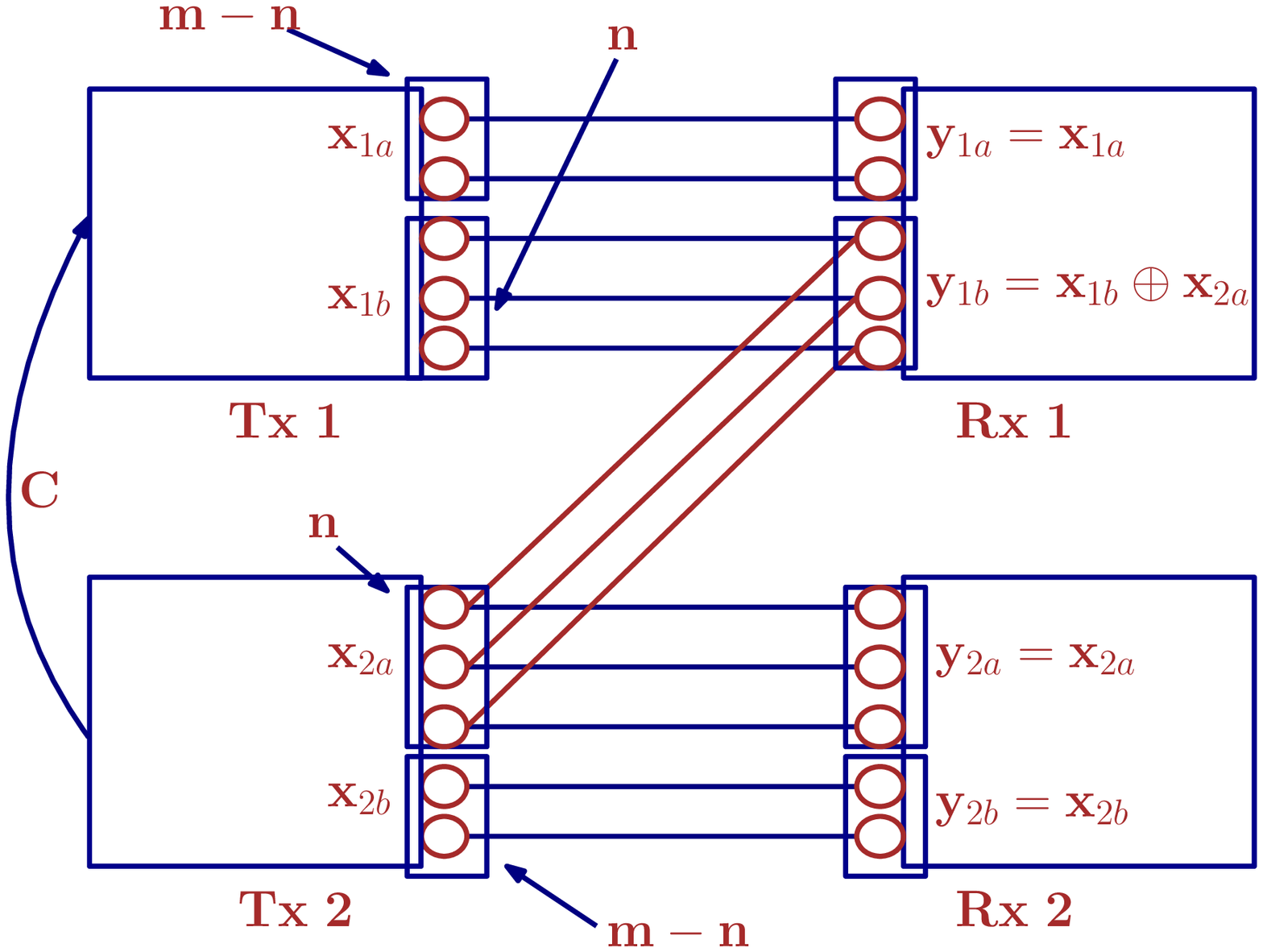}\label{fig:weakmodouter}} \quad
\subfigure[][$(m,n)=(4,5)$]{\includegraphics[width=1.6in,height=1.7in]{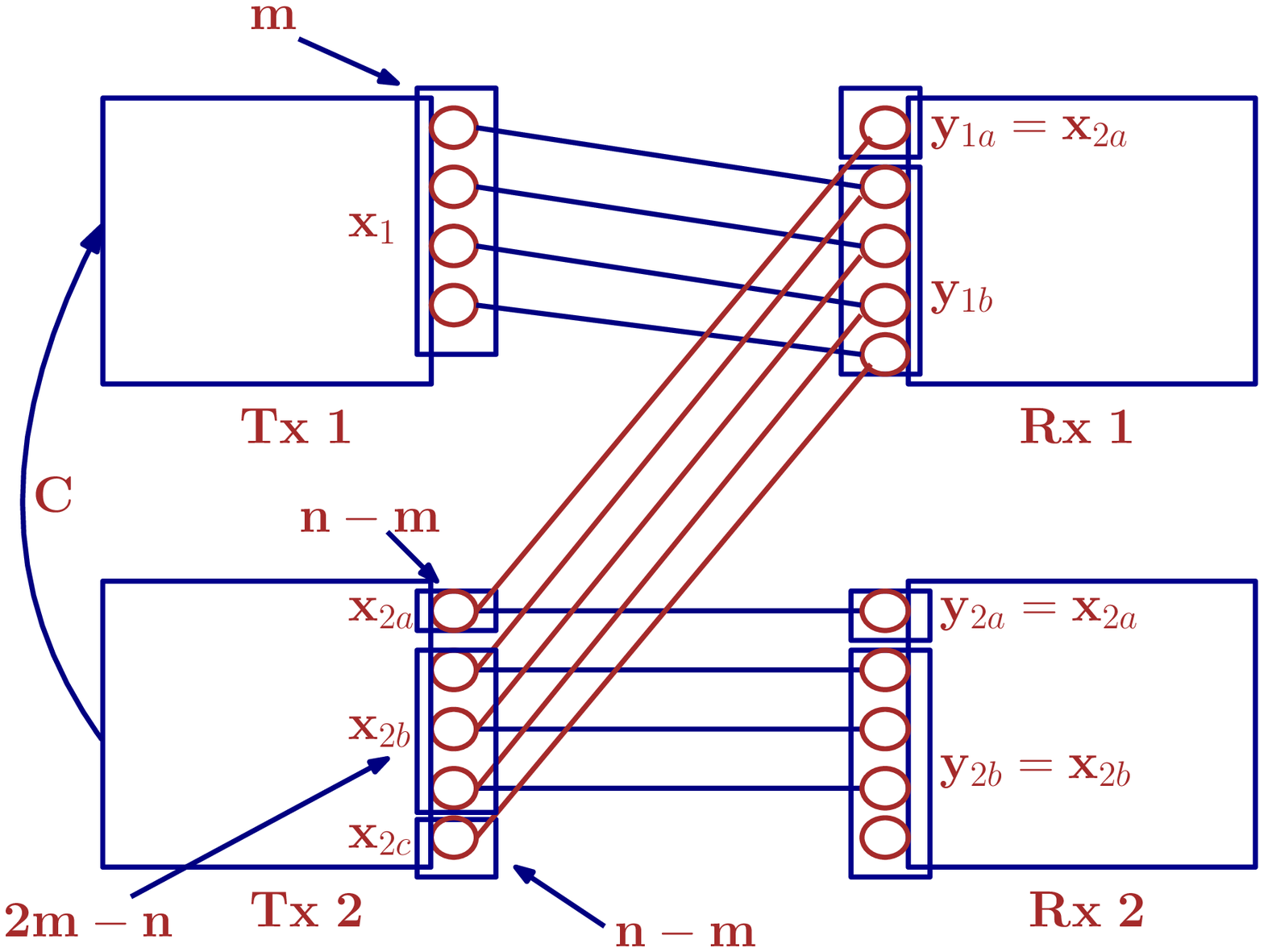}\label{fig:highouter}}} \\
\caption[]{Deterministic model: partitioning of encoded messages and outputs.}\label{fig:outerpartition}
\end{figure}
\begin{theorem}\label{th:theorem-weakmod-outer1}
In the weak and moderate interference regimes, i.e., $0 \leq \alpha \leq 1$, the secrecy capacity region of the $2$-user
deterministic Z-IC with unidirectional transmitter cooperation is upper bounded as
\begin{align}
& R_1 \leq m, R_2 \leq m, \text{ and } \nonumber \\
& R_1 + R_2 \leq 2m-n + C. \label{eq:th-weakmod-outer1}
\end{align}
\end{theorem}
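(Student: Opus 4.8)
The plan is to combine Fano's inequality with a genie-free entropy argument that exploits the Markov relation~\eqref{eq:usefulrelation} together with the partition $\xbold_2=(\xbold_{2a},\xbold_{2b})$, arranged so that the interference seen by receiver~$1$ is ``paid for'' by the rate of user~$2$ and the cooperation budget is charged exactly once. The two single-rate constraints are immediate: by Fano, $NR_i\le I(W_i;\ybold_i^N)+N\epsilon_N\le H(\ybold_i^N)+N\epsilon_N\le Nm+N\epsilon_N$, since in this regime each output vector has at most $m$ levels; letting $N\to\infty$ gives $R_1\le m$ and $R_2\le m$.

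For the sum bound I would start from $N(R_1+R_2)\le I(W_1;\ybold_1^N)+I(W_2;\ybold_2^N)+2N\epsilon_N$ and shape the two terms so a common interference entropy appears in both. Writing $I(W_1;\ybold_1^N)=H(\ybold_1^N)-H(\ybold_1^N\mid W_1)$, the first step is to lower bound the subtracted term. Since conditioning cannot increase entropy, $H(\ybold_1^N\mid W_1)\ge H(\ybold_1^N\mid W_1,\xbold_1^N)$, and once $\xbold_1^N$ is fixed the only residual uncertainty in $\ybold_1=\mathbf{D}^{q-m}\xbold_1\oplus\mathbf{D}^{q-n}\xbold_2$ is the interfering component, so that $H(\ybold_1^N\mid W_1,\xbold_1^N)=H(\xbold_{2a}^N\mid W_1,\xbold_1^N)$. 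The crucial move is to invoke~\eqref{eq:usefulrelation}: since $(W_1,\xbold_1^N)\to\vtwoone^N\to\xbold_{2a}^N$, further conditioning on $\vtwoone^N$ renders $\xbold_{2a}^N$ independent of the transmitter-$1$ variables, giving $H(\xbold_{2a}^N\mid W_1,\xbold_1^N)\ge H(\xbold_{2a}^N\mid W_1,\xbold_1^N,\vtwoone^N)=H(\xbold_{2a}^N\mid\vtwoone^N)$. Hence $I(W_1;\ybold_1^N)\le Nm-H(\xbold_{2a}^N\mid\vtwoone^N)$.

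For user~$2$, since $\ybold_2^N=\xbold_2^N=(\xbold_{2a}^N,\xbold_{2b}^N)$ and $\xbold_{2b}$ occupies only $m-n$ levels, $I(W_2;\ybold_2^N)\le H(\xbold_{2a}^N)+H(\xbold_{2b}^N)\le H(\xbold_{2a}^N)+N(m-n)$. Adding the two bounds, the interference entropies collapse into a single mutual information,
\begin{align}
I(W_1;\ybold_1^N)+I(W_2;\ybold_2^N) &\le N(2m-n)+H(\xbold_{2a}^N)-H(\xbold_{2a}^N\mid\vtwoone^N) \nonumber \\
&= N(2m-n)+I(\xbold_{2a}^N;\vtwoone^N),
\end{align}
and the rate-limited link caps the last term by $H(\vtwoone^N)\le NC$. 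Dividing by $N$ and letting $N\to\infty$ yields $R_1+R_2\le 2m-n+C$, i.e.~\eqref{eq:th-weakmod-outer1}; note that no secrecy constraint is invoked, in agreement with the discussion preceding the theorem.

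The step I expect to be the main obstacle is exactly the dependence introduced by cooperation. A naive converse would hand $\vtwoone^N$ to receiver~$1$ as explicit side information, contributing $I(W_1;\vtwoone^N)\le NC$, and would then bound the interference mismatch by a \emph{second} $NC$, double-counting the cooperation budget and yielding the loose value $2m-n+2C$. The fix is to never introduce $\vtwoone^N$ as a genie for user~$1$, but only to use~\eqref{eq:usefulrelation} to rewrite the subtracted entropy as $H(\xbold_{2a}^N\mid\vtwoone^N)$; the rate $C$ then surfaces once, through $I(\xbold_{2a}^N;\vtwoone^N)$, when the two users' bounds are summed. A secondary technicality is ensuring validity for arbitrary, possibly stochastic, encoders, which is why the lower bound on $H(\ybold_1^N\mid W_1)$ conditions on the realized codeword $\xbold_1^N$ rather than on any private randomness.
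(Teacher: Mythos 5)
Your proposal is correct and takes essentially the same approach as the paper: the same partitioning $\ybold_1 = (\xbold_{1a}, \xbold_{1b}\oplus\xbold_{2a})$, $\xbold_2 = (\xbold_{2a},\xbold_{2b})$, the same use of the Markov relation \eqref{eq:usefulrelation} to reduce the conditional entropy of $\ybold_1^N$ given $(W_1,\xbold_1^N,\vtwoone^N)$ to $H(\xbold_{2a}^N\mid\vtwoone^N)$, and the cooperation budget charged exactly once via $H(\vtwoone^N)\leq NC$. The only difference is presentational: the paper generates the cancelling term $H(\xbold_{2a}^N\mid\vtwoone^N)$ by having a genie provide $(\xbold_{2a}^N,\vtwoone^N)$ to receiver~2, whereas you obtain the same cancellation genie-free by pairing $H(\xbold_{2a}^N)-H(\xbold_{2a}^N\mid\vtwoone^N)=I(\xbold_{2a}^N;\vtwoone^N)\leq H(\vtwoone^N)$.
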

\begin{proof}
See Appendix~\ref{sec:append-det-ZIC-outer1}.
\end{proof}
\textit{Remarks:}
\begin{itemize}
\item Interestingly, using the above theorem and the achievable result in \cite{partha-isit-2015}, it can be shown that the secrecy constraints at the receivers 
\emph{do not result in any penalty} on the capacity region. Thus, secrecy can 
be obtained for free in the weak/moderate interference regimes, for all the 
values of $C$. The outer bound in Theorem~\ref{th:theorem-weakmod-outer1} also serves as outer bound on the capacity region of the $2$-user Z-IC with unidirectional limited rate transmitters cooperation, when there is no secrecy constraints at the receivers.
\item When $C=0$ and $0 < \alpha \leq 1$, 
the outer bound derived in Theorem~\ref{th:theorem-weakmod-outer1}
matches the outer bound in Theorem~$2$ in \cite{li-isit-2008}.
\end{itemize}
\begin{figure}[t]
	\centering
	\mbox{\subfigure[][$(R_1, R_2) = (5, 2)$]{\includegraphics[width=1.6in,height=1.5in]{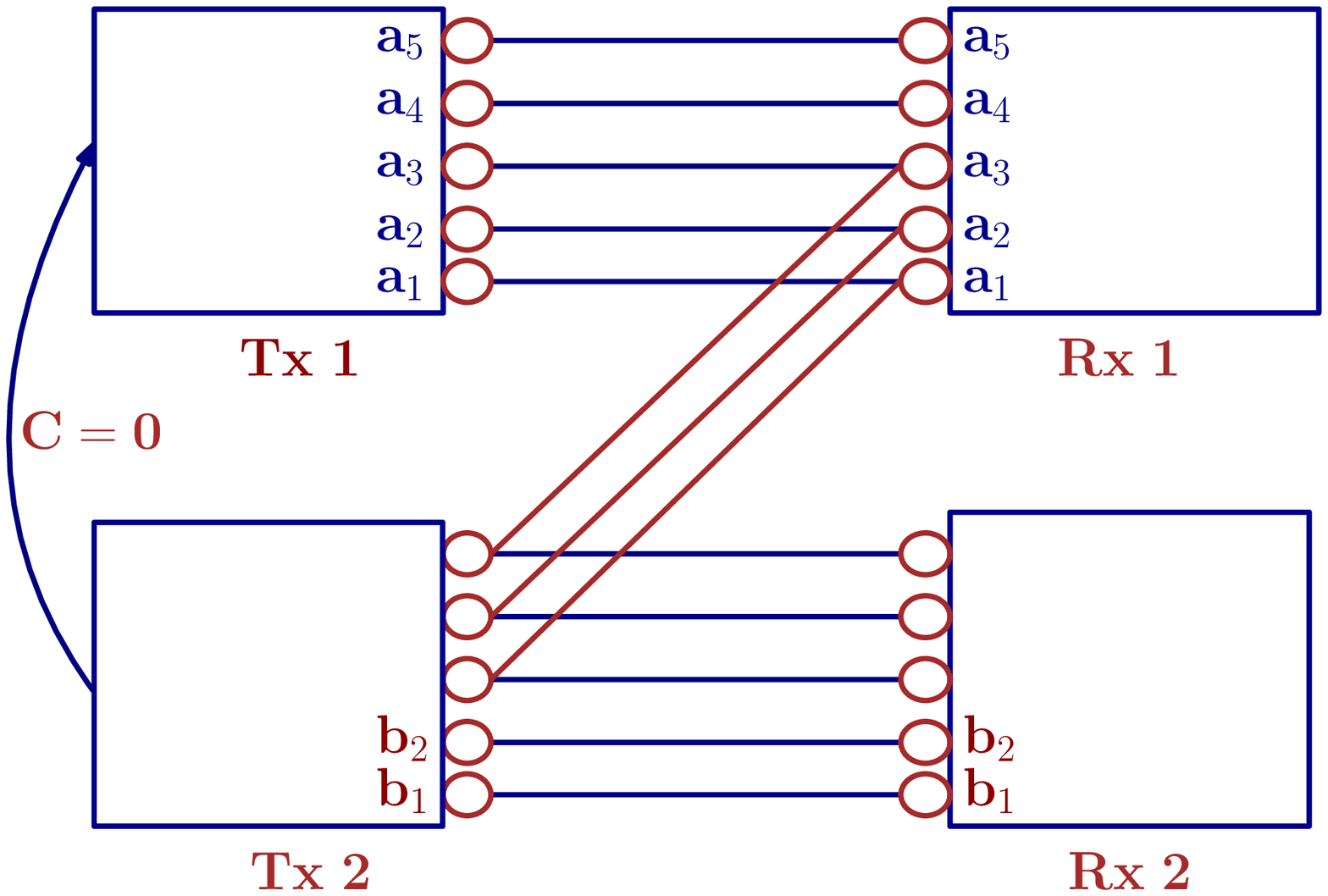}\label{fig:wekmod_ach_case1}} \quad
		\subfigure[][$(R_1, R_2) = (2, 5)$] {\includegraphics[width=1.6in,height=1.7in]{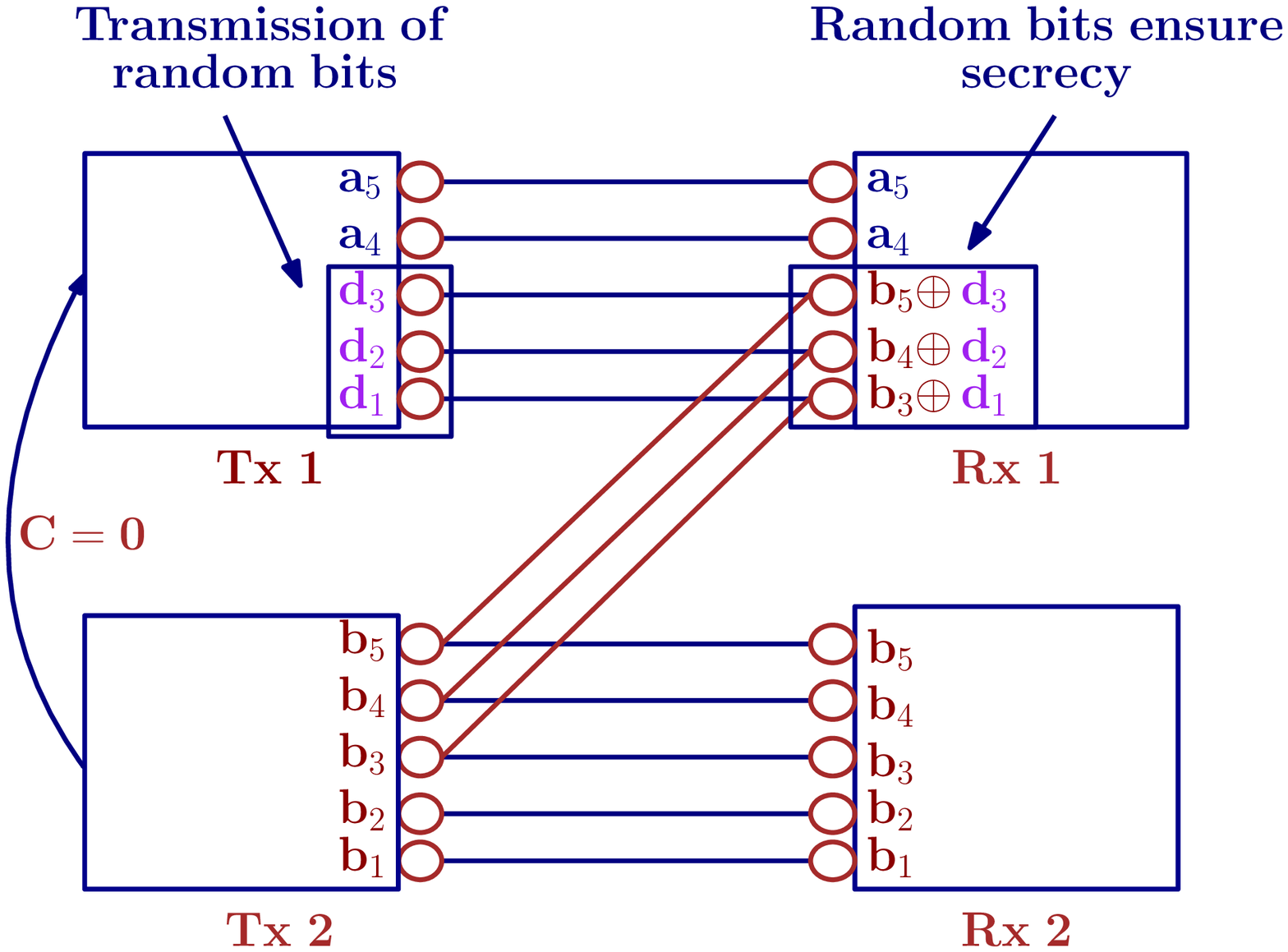}\label{fig:weakmod_ach_case2}}} \\
	\caption[]{Achievable schemes: deterministic Z-IC with $m =5, n=3$ and $C=0$.} \label{fig:ach_scheme}
\end{figure}
It is intuitive to think  that as the strength of interference increases, the achievable secrecy rate may decrease. In particular, in the high/very high interference regimes, the secrecy constraint may lead to a rate penalty, in contrast to the weak/moderate interference regime. Hence, in the high/very high interference regime, the secrecy constraint is used along with the reliability criteria in the derivation of the outer bounds.

First, consider the high interference regime, i.e., $1 < \alpha < 2$. In this case, it is not difficult to see that the rate of user~$1$ can be upper bounded by $m$. To get
insights into the outer bounds on $R_2$ and $R_1 + R_2$, consider Fig.~\ref{fig:highouter}. One can see that transmitter~$2$
cannot use the levels $[1:n-m]$ for transmitting its own data as the corresponding links do not exist at the
intended receiver. Any data bits transmitted on the levels $[m+1:n]$, i.e., $\xbold_{2a}$, will be received
without interference at receiver~$1$. If receiver~$2$ can decode these data bits, receiver~$1$ will also be
able to decode these data bits. Hence, these data bits $\ybold_{1a} = \xbold_{2a}$ will not be secure. Hence, they are provided as side information to receiver~$2$ to obtain the upper bounds. Then, using the secrecy constraint at receiver~$1$, the following outer bounds
can be obtained.
\begin{theorem}\label{theorem-high-outer1}
In the high interference regime, i.e., $1 < \alpha < 2$, the secrecy capacity region of the $2$-user deterministic Z-IC with
unidirectional transmitter cooperation is upper bounded as
\begin{align}
& R_1 \leq m, R_2 \leq 2m-n, \text{ and }\nonumber \\
& R_1 + R_2 \leq m+C. \label{eq:th-high-outer1}
\end{align}
\end{theorem}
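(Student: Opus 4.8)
The plan is to prove the three inequalities separately, using Fano's inequality at each legitimate receiver together with the weak secrecy constraint $\frac{1}{N}I(W_2;\mathbf{y}_1^N)\to 0$, and to let the cooperation capacity $C$ enter the sum bound through a genie that reveals $\vtwoone^N$. Throughout I would use that in this regime $q=\max\{m,n\}=n$, so the interfering link $\mathbf{D}^{q-n}$ is the identity and the top $n-m$ received levels at receiver~1 equal $\mathbf{x}_{2a}^N=\mathbf{y}_{1a}^N$, the clean part identified in Fig.~\ref{fig:highouter}. The bound $R_1\le m$ is immediate and needs no secrecy: since the messages are independent, $W_1\perp\mathbf{x}_2^N$, so by Fano $NR_1\le I(W_1;\mathbf{y}_1^N)+N\epsilon_N\le I(W_1;\mathbf{y}_1^N\mid\mathbf{x}_2^N)+N\epsilon_N\le H(\mathbf{D}^{q-m}\mathbf{x}_1^N\mid\mathbf{x}_2^N)+N\epsilon_N\le Nm+N\epsilon_N$.

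For $R_2\le 2m-n$ I would start from Fano, $NR_2\le I(W_2;\mathbf{y}_2^N)+N\epsilon_N$, and remove the leakage using secrecy. Because $\mathbf{x}_{2a}^N=\mathbf{y}_{1a}^N$ is a component of $\mathbf{y}_1^N$, weak secrecy gives $I(W_2;\mathbf{x}_{2a}^N)\le I(W_2;\mathbf{y}_1^N)\le N\epsilon'_N$, whence $NR_2\le I(W_2;\mathbf{y}_2^N)-I(W_2;\mathbf{x}_{2a}^N)+N(\epsilon_N+\epsilon'_N)$. The structural fact that makes this work, and which uses $\alpha<2$ (so that $m\ge n-m$), is that $\mathbf{x}_{2a}^N$ (levels $[m+1:n]$ of $\mathbf{x}_2$) lies inside the top $m$ levels that receiver~2 observes, i.e.\ $\mathbf{x}_{2a}^N$ is a deterministic function of $\mathbf{y}_2^N$. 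The chain rule then collapses the difference into one conditional term, $I(W_2;\mathbf{y}_2^N)-I(W_2;\mathbf{x}_{2a}^N)=I(W_2;\mathbf{y}_2^N\mid\mathbf{x}_{2a}^N)\le H(\mathbf{y}_2^N\mid\mathbf{x}_{2a}^N)$, and removing the $n-m$ known top levels from the $m$ levels of $\mathbf{y}_2^N$ leaves exactly the $2m-n$ levels on $[n-m+1:m]$, so this is at most $N(2m-n)$.

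The sum bound is the substantive step. The plan is: combine Fano at both receivers with the splitting $I(W_2;\mathbf{y}_2^N)\le I(W_2;\mathbf{y}_1^N)+I(W_2;\mathbf{y}_2^N\mid\mathbf{y}_1^N)$ and secrecy to obtain $N(R_1+R_2)\le I(W_1;\mathbf{y}_1^N)+I(W_2;\mathbf{y}_2^N\mid\mathbf{y}_1^N)+N\epsilon''_N$. Expanding $I(W_1;\mathbf{y}_1^N)=H(\mathbf{y}_1^N)-H(\mathbf{y}_1^N\mid W_1)$ and bounding $I(W_2;\mathbf{y}_2^N\mid\mathbf{y}_1^N)\le H(\mathbf{y}_2^N\mid\mathbf{y}_1^N)$, the two entropies recombine into the joint entropy, giving $N(R_1+R_2)\le H(\mathbf{y}_1^N,\mathbf{y}_2^N)-H(\mathbf{y}_1^N\mid W_1)+N\epsilon''_N$. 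The decisive move is to lower bound the subtracted term via a genie, $H(\mathbf{y}_1^N\mid W_1)\ge H(\mathbf{y}_1^N\mid W_1,\mathbf{x}_1^N,\vtwoone^N)=H(\mathbf{x}_2^N\mid W_1,\mathbf{x}_1^N,\vtwoone^N)=H(\mathbf{x}_2^N\mid\vtwoone^N)$, where the first equality uses that $\mathbf{D}^{q-n}=I$ makes $\mathbf{y}_1^N\leftrightarrow\mathbf{x}_2^N$ bijective once $\mathbf{x}_1^N$ is known, and the last equality is exactly the Markov relation~\eqref{eq:usefulrelation}. Bounding $H(\mathbf{y}_1^N,\mathbf{y}_2^N)\le H(\mathbf{x}_2^N)+H(\mathbf{D}^{q-m}\mathbf{x}_1^N\mid\mathbf{x}_2^N)\le H(\mathbf{x}_2^N)+Nm$ (using that $\mathbf{y}_2^N$ is a function of $\mathbf{x}_2^N$) then yields $N(R_1+R_2)\le Nm+I(\mathbf{x}_2^N;\vtwoone^N)+N\epsilon''_N\le Nm+H(\vtwoone^N)+N\epsilon''_N\le N(m+C)+N\epsilon''_N$.

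The main obstacle, and the place where care is essential, is making the cooperation rate $C$ appear with the correct coefficient: a direct application of Fano and secrecy only produces the loose bound $R_1+R_2\le 3m-n$. The genie $\vtwoone^N$ must be introduced specifically on the $H(\mathbf{y}_1^N\mid W_1)$ term, and the Markov structure~\eqref{eq:usefulrelation} must be invoked to convert the genie-conditioned entropy into $H(\mathbf{x}_2^N\mid\vtwoone^N)$, so that the residual difference telescopes to $I(\mathbf{x}_2^N;\vtwoone^N)\le H(\vtwoone^N)\le NC$. Taking $N\to\infty$ in all three inequalities and absorbing the vanishing Fano and secrecy terms $\epsilon_N,\epsilon'_N,\epsilon''_N$ then completes the proof.
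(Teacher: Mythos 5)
Your proposal is correct. The bounds $R_1\le m$ and $R_2\le 2m-n$ follow essentially the paper's own route: the paper likewise gives receiver~2 the clean part $\ybold_{1a}^N=\xbold_{2a}^N$ of receiver~1's output, uses the secrecy constraint to kill $I(W_2;\ybold_{1a}^N)$, and bounds the residual $H(\ybold_2^N|\ybold_{1a}^N)=H(\xbold_{2b}^N|\xbold_{2a}^N)\le N(2m-n)$; your difference form $I(W_2;\ybold_2^N)-I(W_2;\xbold_{2a}^N)=I(W_2;\ybold_2^N|\xbold_{2a}^N)$ is the same computation, since, as you note, $\xbold_{2a}$ is a deterministic function of $\ybold_2$ precisely because $\alpha<2$.

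The sum-rate bound is where you genuinely depart from the paper, and your version is valid. The paper provides only $\ybold_{1a}^N$ as side information, rewrites $I(W_1;\ybold_1^N)=I(W_1;\ybold_{1b}^N|\xbold_{2a}^N)$ using the independence of $W_1$ and transmitter~2's encoding, and then works with a three-way partition $(\xbold_{2a},\xbold_{2b},\xbold_{2c})$ of $\xbold_2$: after conditioning the negative entropy term on $(\xbold_1^N,\vtwoone^N)$ and invoking \eqref{eq:usefulrelation}, the conditional entropies $H(\xbold_{2b}^N|\xbold_{2a}^N,\vtwoone^N)$ cancel pairwise, leaving $H(\ybold_{1b}^N)+H(\vtwoone^N)\le N(m+C)$. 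You instead condition receiver~2's mutual information on the \emph{entire} $\ybold_1^N$, recombine $H(\ybold_1^N)+H(\ybold_2^N|\ybold_1^N)$ into the joint entropy $H(\ybold_1^N,\ybold_2^N)$, and extract the cooperation term through the single chain $H(\ybold_1^N|W_1)\ge H(\ybold_1^N|W_1,\xbold_1^N,\vtwoone^N)=H(\xbold_2^N|\vtwoone^N)$, where the equality uses the bijection $\ybold_1\leftrightarrow\xbold_2$ given $\xbold_1$ (valid since $q=n$ makes the interfering map the identity) followed by the Markov relation \eqref{eq:usefulrelation}; everything then telescopes to $Nm+I(\xbold_2^N;\vtwoone^N)\le N(m+C)$. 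Both arguments rest on the same two pillars—secrecy to remove $I(W_2;\ybold_1^N)$, and \eqref{eq:usefulrelation} together with $H(\vtwoone^N)\le NC$ to produce the $+C$—but yours avoids the fine partition at the cost of leaning on a regime-specific structural fact (invertibility of the interference map in the deterministic model), whereas the paper's partitioning is deliberately showcased because it transfers: it dictates which side information ($\ybold_{1a}^N$, or its surrogate $\ybold_1^N$ and $(W_1,\ybold_1^N)$) to provide in the Gaussian counterpart, Theorem~\ref{th:theorem-gauss-outer2}, where no such bijection exists. For the deterministic theorem itself, your joint-entropy route is the more compact argument.
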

\begin{proof}
See Appendix~\ref{sec:append-ZIC-outer2}.
\end{proof}
\textit{Remarks:} 
\begin{itemize}
\item The outer bound on $R_2$ shows that there is a nonzero penalty on the rate of user~$2$ due to the secrecy constraint at receiver~$1$, in contrast to the weak/moderate interference regime (see Theorem~\ref{th:theorem-weakmod-outer1}).
\item When $C=0$, the outer bound on the sum rate suggests that for user~$2$ to achieve a nonzero secrecy rate, user~$1$ has to sacrifice some of its rate.
\end{itemize}
\begin{figure}
\begin{center}
\includegraphics[width=2.5in,height=2.2in]{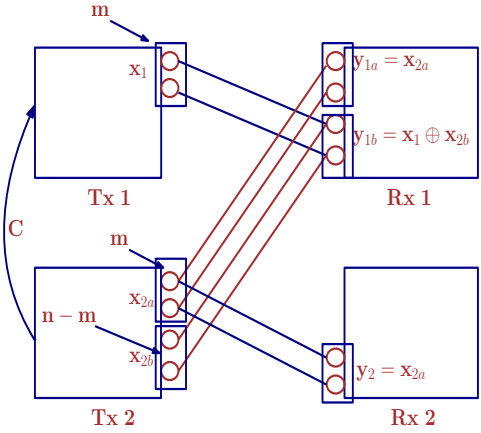}
\caption{Deterministic Z-IC with $(m,n)=(2,4)$: Illustration of partitioning of the message/output.}
 \label{fig:veryhighouter}
\end{center}
\end{figure}

Now, consider the derivation of the outer bound for the very high interference regime. In
Fig.~\ref{fig:veryhighouter},  it can be noticed that only the levels 
$[n-m+1:m]$ can be used to send data from transmitter~$2$ to receiver~$2$,
as the links corresponding to the lower levels $[1:n-m]$ do not exist at
receiver~$2$. The data bits transmitted on the levels $[n-m+1:n]$, i.e., 
$\xbold_{2a}$, 
are received without interference at receiver~$1$. If receiver~$2$ can
decode these data bits, then receiver~$1$ can also decode these data bits.
Hence, transmitter~$2$ cannot send any data bits securely on these levels. To capture this in
the derivation, receiver~$2$ is provided with the side information of the form $\ybold_{1a}^N$, which in
turn helps to bound the rate by $I(W_2;\ybold_2^N|\ybold_{1a}^N)$. It can be
noticed that this quantity is zero as $\ybold_{1a} = \ybold_2 = \xbold_{2a}$. The outer bounds on the secrecy capacity region in the very high interference regime $(\alpha \geq 2)$ are given in the following theorem. 
\begin{theorem}\label{theorem-veryhigh-outer1}
In the very high interference regime, i.e., $ \alpha \geq 2$,  the secrecy capacity of the
$2$-user Z-IC with unidirectional cooperation  is upper bounded by $R_1 \leq m$ and $R_2 \leq 0$.
\end{theorem}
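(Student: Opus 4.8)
The plan is to establish the two inequalities separately. The bound $R_1 \leq m$ requires nothing new: it is exactly the trivial single-user bound noted above, which holds in every interference regime. All the effort goes into showing that the secrecy constraint forces $R_2 \leq 0$, i.e., that once $\alpha \geq 2$ user~$2$ cannot sustain any positive secure rate, regardless of the cooperation rate $C$.

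The key structural observation, illustrated in Fig.~\ref{fig:veryhighouter}, is that in this regime \emph{everything receiver~$2$ can decode is already received cleanly at receiver~$1$}. Receiver~$2$ sees only the top $m$ levels of $\xbold_2$, namely $\ybold_2 = \mathbf{D}^{q-m}\xbold_2$, corresponding to the levels $[n-m+1:n]$ of transmitter~$2$'s signal; denote this sub-vector $\xbold_{2a}$. At receiver~$1$, the direct signal $\xbold_1$ contributes to at most $m$ levels, while $\xbold_2$ is received over all $n$ levels. Since $\alpha \geq 2$ means $n-m \geq m$, the levels $[n-m+1:n]$ carrying $\xbold_{2a}$ are disjoint from the levels occupied by $\xbold_1$, so the corresponding portion $\ybold_{1a}$ of receiver~$1$'s output is interference-free and satisfies the identity $\ybold_{1a} = \ybold_2 = \xbold_{2a}$. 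I would first make this partition precise and verify the identity level by level from the shift-matrix description of the channel; this is the step that genuinely invokes the very-high-interference hypothesis $n \geq 2m$.

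With the identity in hand, the rate bound follows from a short genie-aided argument that supplies $\ybold_{1a}^N$ to receiver~$2$ as side information and then invokes the secrecy constraint. Starting from $NR_2 = H(W_2)$, Fano's inequality (receiver~$2$ decodes $W_2$ from $\ybold_2^N$, up to a term $N\epsilon_N$) gives $NR_2 \leq I(W_2;\ybold_2^N) + N\epsilon_N \leq I(W_2;\ybold_2^N,\ybold_{1a}^N) + N\epsilon_N$. Expanding by the chain rule, the term $I(W_2;\ybold_2^N \mid \ybold_{1a}^N)$ vanishes because $\ybold_2^N$ is a deterministic copy of $\ybold_{1a}^N$, leaving $NR_2 \leq I(W_2;\ybold_{1a}^N) + N\epsilon_N$. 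Since $\ybold_{1a}^N$ is a sub-vector of $\ybold_1^N$, monotonicity of mutual information yields $I(W_2;\ybold_{1a}^N) \leq I(W_2;\ybold_1^N)$, and the weak-secrecy constraint $\tfrac{1}{N} I(W_2;\ybold_1^N) \to 0$ closes the chain. Dividing by $N$ and letting $N \to \infty$ forces $R_2 \leq 0$, hence $R_2 = 0$.

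The only genuine obstacle is the first step: pinning down the partition of the outputs and proving the clean-copy identity $\ybold_{1a} = \ybold_2 = \xbold_{2a}$ from the channel description together with the condition $n \geq 2m$. Once that is settled, the remainder is a routine Fano-plus-secrecy manipulation that does not use the cooperative link at all, which is consistent with the theorem holding irrespective of $C$.
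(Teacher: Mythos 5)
Your proposal is correct and follows essentially the same route as the paper's own proof: the trivial bound $R_1 \leq m$, the structural identity $\ybold_{1a} = \ybold_2 = \xbold_{2a}$ (valid precisely because $n \geq 2m$ makes the top $m$ levels of $\xbold_2$ arrive interference-free at receiver~$1$), and then the genie argument $NR_2 \leq I(W_2;\ybold_{1a}^N) + I(W_2;\ybold_2^N|\ybold_{1a}^N) + N\epsilon_N$, where the second term is zero by the identity and the first is bounded via the weak-secrecy constraint at receiver~$1$. The only difference is presentational: you spell out the level-by-level verification of the identity and the monotonicity step $I(W_2;\ybold_{1a}^N) \leq I(W_2;\ybold_1^N)$, which the paper leaves to the figure and a brief remark.
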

\begin{proof}
See Appendix \ref{sec:append-ZIC-outer3}.
\end{proof}
\textit{Remark:}
Since $R_2$ is upper bounded by zero in the very high interference regime, it is not possible for transmitter~$2$ to 
achieve any nonzero secrecy rate, irrespective of $C$.


\section{Outer Bounds for the Gaussian Z-IC Model}\label{sec:outer-gaussian}
In this section, the outer bounds on the secrecy capacity region for the Z-IC with unidirectional transmitter cooperation are stated as Theorems~\ref{th:theorem-gauss-outer1}-\ref{th:theorem-gauss-outer2}. 
Although these outer bounds are related to the corresponding outer bounds derived in the deterministic case, the extension of the outer bounds to the Gaussian case is non-trivial due to the following differences between the two models. First, in the Gaussian case, noise cannot be modeled by truncation as in the deterministic case. Second, in the Gaussian case, the superposition of signals or interference is modeled by real addition in contrast to the modulo-2 addition used in the deterministic case. Hence, the carry over involved in the real addition is not captured in the deterministic case. Third, the derivation of the outer bound in the Gaussian case involves bounding of differential entropy terms containing continuous as well as discrete random variables, due to the unidirectional cooperation between the transmitters. This makes the derivation of the outer bounds even more difficult.   

Hence, the partitioning of the encoded messages or outputs used in the derivation of the outer bounds for the deterministic case cannot  be directly applicable to the Gaussian case. To overcome this problem, either analogous quantities that serve as side-information at receiver need to be found to mimic the partitioning of the encoded messages/outputs, or the bounding steps need to be modified taking cue from the deterministic model. This helps to obtain tractable outer bounds on the secrecy capacity region, which are presented in the following subsections.  
\subsection{Weak/moderate interference regime $(0 \leq \alpha \leq 1)$}
 The outer bound derived in Theorem~\ref{th:theorem-weakmod-outer1} involved providing  the side information 
 $(\xbold_{2a}, \vtwoone)$ to receiver~$2$ by a genie. The quantity $\xbold_{2a}$ corresponds to the part of the encoded message $\xbold_{2}$ of transmitter~$2$ which causes interference at receiver~$1$ (See Fig.~\ref{fig:weakmodouter}). In the Gaussian case, to mimic the approach 
used for the deterministic case, receiver~$2$ is provided with side information  $(\sbold_2 \triangleq h_c\xbold_2 + \zbold_1, \vtwoone)$. 
Note that outer bound based on this idea was presented in~\cite{bagheri-arxiv-2010}, which considered
Gaussian Z-IC with unidirectional transmitter cooperation, but without secrecy constraints at the receivers. For the sake of 
completeness, the result is stated as Theorem~\ref{th:theorem-gauss-outer1}. The outer bound in Theorem~\ref{th:theorem-weakmod-outer1} for the weak/moderate interference regime can be considered as deterministic equivalent of the outer bound presented below. 
\begin{theorem}[\cite{bagheri-arxiv-2010}]\label{th:theorem-gauss-outer1}
The capacity region of the 2-user Gaussian Z-IC with unidirectional transmitter cooperation is
 upper bounded as
\begin{align}
& R_1 \leq 0.5\log(1+\text{SNR}), \: R_2 \leq 0.5 \log(1+ \text{SNR}), \nonumber \\
& R_1 + R_2 \leq 0.5\log(1 + \text{SNR} + \text{INR} + 2\sqrt{\text{SNR}\cdot\text{INR}}) \nonumber \\
& \qquad \qquad + 0.5\log\lb1 + \frac{\text{SNR}}{1 + \text{INR}} \rb + C_G, \label{eq:zic-outer-gaussian1}
\end{align}
where $\text{SNR} \triangleq h_d^2P$ and $\text{INR} \triangleq h_c^2P$.
\end{theorem}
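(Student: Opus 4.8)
The plan is to combine Fano's inequality with a genie-aided converse, handing receiver~$2$ the side information $(\sbold_2^N,\vtwoone^N)$ exactly as indicated in the text, where $\sbold_2 \triangleq h_c\xbold_2 + \zbold_1$ is the interference-plus-noise seen at receiver~$1$. The two single-rate constraints are immediate. For receiver~$2$ the channel is the point-to-point link $\ybold_2 = h_d\xbold_2 + \zbold_2$, so $NR_2 \le I(W_2;\ybold_2^N) + N\epsilon_N \le \tfrac{N}{2}\log(1+\SNRt)$. For $R_1$ I would provide $(W_2,\vtwoone^N)$ to receiver~$1$ as a genie, which lets it strip the interference $h_c\xbold_2^N$ and reduces receiver~$1$ to the same point-to-point link, yielding $R_1 \le \tfrac12\log(1+\SNRt)$; here I use that $W_1$ is independent of $(W_2,\vtwoone^N)$, since $\vtwoone$ is generated from $W_2$ only.

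The sum-rate bound is the heart of the proof. Starting from $N(R_1+R_2) - N\epsilon_N \le I(W_1;\ybold_1^N) + I(W_2;\ybold_2^N)$, I first append $\vtwoone^N$ to both receivers' observations: this is free for user~$1$ because $I(W_1;\vtwoone^N)=0$, and for user~$2$ it costs $I(W_2;\vtwoone^N) \le H(\vtwoone^N) \le NC_G$, the entropy bound following from the rate-$C_G$ cooperative link. Appending also $\sbold_2^N$ to receiver~$2$ and expanding by the chain rule gives
\begin{align}
N(R_1+R_2) - N\epsilon_N &\le NC_G + I(W_1;\ybold_1^N\mid\vtwoone^N) \nonumber\\
&\quad + I(W_2;\sbold_2^N\mid\vtwoone^N) + I(W_2;\ybold_2^N\mid\sbold_2^N,\vtwoone^N).
\end{align}
The crucial step is to write the first two mutual informations as differential entropies and cancel the common term $h(\sbold_2^N\mid\vtwoone^N)$. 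Using $\ybold_1 = h_d\xbold_1 + \sbold_2$ together with the Markov relation \eqref{eq:usefulrelation} — which guarantees that, conditioned on $\vtwoone^N$, the pair $(W_1,\xbold_1^N)$ is independent of $(W_2,\xbold_2^N)$ and hence of $\sbold_2^N$ — I can lower bound $h(\ybold_1^N\mid W_1,\vtwoone^N) \ge h(\sbold_2^N\mid \vtwoone^N)$ (by further conditioning on $\xbold_1^N$ and translating out the known $h_d\xbold_1^N$). The two $h(\sbold_2^N\mid\vtwoone^N)$ terms then cancel, leaving $h(\ybold_1^N\mid\vtwoone^N) - h(\sbold_2^N\mid W_2,\vtwoone^N)$, and the latter is at least $h(\zbold_1^N) = \tfrac{N}{2}\log 2\pi e$, since conditioning on $\xbold_2^N$ reduces $\sbold_2^N$ to pure noise.

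It then remains to bound the surviving differential entropies by their Gaussian maxima. Dropping the conditioning, $h(\ybold_1^N\mid\vtwoone^N) \le \sum_t h(y_{1,t})$, and since the correlation $\mathbb{E}[x_1 x_2]$ induced by cooperation satisfies $\mathbb{E}[x_1 x_2]\le P$ by Cauchy--Schwarz, one gets $\mathrm{Var}(y_{1,t}) \le 1 + \SNRt + \INRt + 2\SNRprod$, producing the first logarithm. For the last term, $h(\ybold_2^N\mid\sbold_2^N,\vtwoone^N) \le \sum_t h(y_{2,t}\mid s_{2,t})$, and the conditional (linear-MMSE) variance of $y_{2,t}$ given $s_{2,t}$, maximized over the power constraint $\mathrm{Var}(x_2)\le P$, equals $1 + \SNRt/(1+\INRt)$, giving the second logarithm. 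Subtracting the two noise entropies $\tfrac{N}{2}\log 2\pi e$ cancels the $2\pi e$ factors, and dividing by $N$ and letting $N\to\infty$ yields \eqref{eq:zic-outer-gaussian1}. I expect the main obstacle to be the cancellation step: it relies on the Markov relation \eqref{eq:usefulrelation} to decouple the cooperation-induced dependence between $\xbold_1^N$ and $\sbold_2^N$, and one must arrange the conditioning on $\xbold_1^N$ and $\xbold_2^N$ so that the argument survives stochastic (artificial-noise) encoding at transmitter~$1$ rather than assuming deterministic encoders. A secondary subtlety is justifying the single-letter conditional max-entropy bound that produces the $\SNRt/(1+\INRt)$ term.
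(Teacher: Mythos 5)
Your proof is correct and takes essentially the approach the paper intends: the paper states Theorem~\ref{th:theorem-gauss-outer1} without its own proof (it is imported from \cite{bagheri-arxiv-2010}), indicating only that receiver~$2$ should be handed the genie side information $(\sbold_2^N,\vtwoone^N)$, and your argument implements exactly this genie, with $H(\vtwoone^N)\leq NC_G$ and the Markov relation \eqref{eq:usefulrelation} yielding the cancellation $h(\ybold_1^N|W_1,\xbold_1^N,\vtwoone^N)=h(\sbold_2^N|\vtwoone^N)$ --- the same machinery the paper itself deploys in the proofs of Theorems~\ref{th:theorem-weakmod-outer1} and~\ref{th:theorem-gauss-outer1a}. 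The two subtleties you flag are routine rather than gaps: in the $R_1$ bound, give the genie $\xbold_2^N$ itself (independent of $W_1$) instead of $(W_2,\vtwoone^N)$, so stochastic encoding at transmitter~$2$ causes no trouble, and the single-letter steps follow from Jensen's inequality because $\log\lb 1+\SNRt+\INRt+2\SNRprod\rb$ and $\log\lb 1+\frac{h_d^2Q}{1+h_c^2Q}\rb$ are concave in the per-symbol powers under the block-average power constraint.
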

Note that the outer bound stated in Theorem~\ref{th:theorem-gauss-outer1} does not use the secrecy constraint at receiver. In the weak/moderate interference regime, the data bits transmitted on the lower levels $[1:m-n]$ of transmitter~$2$ are inherently secure in the deterministic case as shown in Fig.~\ref{fig:ach_scheme}. However, in the Gaussian case, there is no one-to-one analogue of this as noise cannot be modeled by truncation. The secrecy constraint at the receiver may lead to a nonzero penalty in rate for the Gaussian case. Hence,  outer bounds are derived on the rate of user~$2$ and sum rate using the 
secrecy constraint at receiver~$1$, which is stated as a theorem below.  
\begin{theorem}\label{th:theorem-gauss-outer1a}
The secrecy capacity region of the $2$-user
Gaussian Z-IC with unidirectional transmitter cooperation is upper bounded as
\begin{align}
R_1 & \leq 0.5\log(1 + \SNRt), \nonumber \\
R_2 & \leq \displaystyle\max_{- 1 \leq \rho \leq 1}0.5\log\Bigg( 1 + \text{SNR} \nonumber \\
&  \qquad - \frac{(\rho\text{SNR} + \sqrt{\text{SNR}\cdot\text{INR}})^2}{1+ \text{SNR}+\text{INR}+2\rho\sqrt{\text{SNR}\cdot\text{INR}}}\Bigg), \nonumber \\
R_1 + R_2 &\leq \log(1 + \SNRt) - 0.5\log(1 + \INRt) + C_G. \label{eq:theorem-gauss-outer1a}
\end{align}
\end{theorem}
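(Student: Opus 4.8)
The plan is to establish the three bounds in \eqref{eq:theorem-gauss-outer1a} separately, leaning on the structure of the deterministic analogue in Theorem~\ref{th:theorem-weakmod-outer1} but replacing the combinatorial partitioning by differential-entropy manipulations. The bound $R_1 \leq 0.5\log(1+\SNRt)$ is routine: since there is no link from transmitter~$1$ to receiver~$2$, message $W_1$ carries no secrecy penalty, and a standard single-user converse using Fano's inequality together with $\ybold_1 = h_d\xbold_1 + h_c\xbold_2 + \zbold_1$ and the power constraint $E[|x_1|^2]\leq P$ gives the capacity of a point-to-point Gaussian channel of signal-to-noise ratio $\SNRt$. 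The other two bounds are where the secrecy constraint at receiver~$1$, i.e. $\tfrac{1}{N}I(W_2;\ybold_1^N)\to 0$, must actually be invoked.

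For the bound on $R_2$, I would start from $NR_2 \leq I(W_2;\ybold_2^N) + N\epsilon_N$ via Fano, and then subtract the (vanishing) secrecy leakage term to write $NR_2 \leq I(W_2;\ybold_2^N) - I(W_2;\ybold_1^N) + N\epsilon_N'$. The idea is that what receiver~$2$ learns about $W_2$ beyond what the eavesdropping receiver~$1$ already learns is the only secure rate. Expanding both mutual informations as differences of differential entropies and using the Markov relation \eqref{eq:usefulrelation} (conditioning on $\vtwoone^N$ to decouple the two transmitters), the problem reduces to bounding a single-letter expression over the admissible input correlation. The appearance of $\max_{-1\leq\rho\leq1}$ in the statement signals that the genie/secrecy argument produces a conditional-variance expression, and that the worst case over the correlation coefficient $\rho$ between the effective signal and interference must be taken; the quantity $1 + \SNRt - \frac{(\rho\SNRt+\SNRprod)^2}{1+\SNRt+\INRt+2\rho\SNRprod}$ is exactly the MMSE-type residual variance of the desired signal after conditioning on the received mixture. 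I would obtain this by identifying the conditional entropy $h(\ybold_2^N\mid \ybold_1^N,\vtwoone^N)$ and maximizing the resulting Gaussian differential entropy subject to the covariance constraints, invoking the maximum-entropy property of the Gaussian under a covariance constraint to justify the Gaussian-optimal single-letterization.

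For the sum-rate bound $R_1+R_2 \leq \log(1+\SNRt) - 0.5\log(1+\INRt) + C_G$, I would provide the genie side information $(\sbold_2 \triangleq h_c\xbold_2 + \zbold_1, \vtwoone)$ to receiver~$2$, mirroring the deterministic choice of giving $(\xbold_{2a},\vtwoone)$ to receiver~$2$, and again subtract the secrecy term for $W_2$. The cooperative link contributes at most $C_G$ through $H(\vtwoone^N)\leq NC_G$, and the combination of the direct-channel capacity $\log(1+\SNRt)$ against the interference penalty $-0.5\log(1+\INRt)$ should emerge after the differential entropies telescope. The main obstacle, and the step I would spend the most care on, is handling the mixed discrete–continuous entropy terms created by the unidirectional cooperation: $\vtwoone^N$ is a discrete (rate-limited) random variable while the channel outputs are continuous, so the usual maximum-entropy and entropy-power arguments must be applied conditionally on $\vtwoone^N$ and then averaged, and one must verify that the secrecy-leakage subtraction interacts cleanly with this conditioning rather than introducing an uncontrolled positive term. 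Making the single-letterization rigorous while keeping the $\rho$-maximization tight is the crux of the argument.
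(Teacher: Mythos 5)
Your bounds on $R_1$ and $R_2$ are fine and essentially coincide with the paper's argument: Fano's inequality, the enlargement $I(W_2;\ybold_2^N)\leq I(W_2;\ybold_2^N,\ybold_1^N)$, the secrecy constraint $I(W_2;\ybold_1^N)\leq N\epsilon_N$ to kill the leakage term, and Gaussian maximization of $h(\ybold_2^N|\ybold_1^N)$ minus the noise entropy, which yields precisely the conditional-variance expression maximized over the input correlation $\rho$ (your extra conditioning on $\vtwoone^N$ is unnecessary but harmless, since conditioning only reduces entropy).

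The sum-rate bound is where your proposal has a genuine gap. Providing $(\sbold_2^N,\vtwoone^N)$ as genie information to receiver~$2$ and letting "the differential entropies telescope" is exactly the derivation of Theorem~\ref{th:theorem-gauss-outer1} (the bound \emph{without} secrecy, from \cite{bagheri-arxiv-2010}); it cannot produce the term $-0.5\log(1+\INRt)$. The paper instead starts from $N(R_1+R_2)\leq I(W_1;\ybold_1^N)+I(W_2;\ybold_2^N)-I(W_2;\ybold_1^N)+N\epsilon_N$ and decomposes the leakage term by the chain rule, $-I(W_2;\ybold_1^N)=-I(W_2;\ybold_1^N,\sbold_2^N)+I(W_2;\sbold_2^N|\ybold_1^N)$, with $\sbold_2^N=h_c\xbold_2^N+\zbold_1^N$. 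This splits the bound into two pieces, each needing a mechanism absent from your sketch. (i) The piece $I(W_2;\ybold_2^N)-I(W_2;\ybold_1^N,\sbold_2^N)\leq I(W_2;\ybold_2^N)-I(W_2;\sbold_2^N)$ is a \emph{difference} of multi-letter mutual informations; its negative part requires a lower bound, which no maximum-entropy argument supplies. The paper handles it by observing that in the weak/moderate regime ($\INRt\leq\SNRt$) the chain $\xbold_2\rightarrow\ybold_2\rightarrow\sbold_2$ is Markov, so receiver~$1$'s view of $\xbold_2$ is a degraded version of receiver~$2$'s, and then invoking the degraded wiretap/BC single-letterization \cite{li-isit-2008,cheong-TIT-1978} to get $I(W_2;\ybold_2^N)-I(W_2;\sbold_2^N)\leq N[I(\xbold_2;\ybold_2)-I(\xbold_2;\sbold_2)]\leq N[0.5\log(1+\SNRt)-0.5\log(1+\INRt)]$. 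This degradedness step is the sole source of the $-0.5\log(1+\INRt)$ term and is also why the bound is restricted to $\alpha\leq 1$, a restriction your proposal never accounts for. (ii) The remaining piece $I(W_1;\ybold_1^N)+I(W_2;\sbold_2^N|\ybold_1^N)$ must be capped at $0.5\log(1+\SNRt)+C_G$; the paper does this by providing $\vtwoone^N$ and using the Markov relation \eqref{eq:usefulrelation} to show $h(\sbold_2^N|\vtwoone^N)=h(\sbold_2^N|\vtwoone^N,\xbold_1^N)=h(\ybold_1^N|\vtwoone^N,\xbold_1^N)$, which cancels $I(\xbold_1^N;\ybold_1^N|\vtwoone^N)$ and leaves $h(\sbold_1^N)-h(\zbold_1^N)+NC_G$ with $\sbold_1^N=h_d\xbold_1^N+\zbold_1^N$. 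You correctly flag the mixed discrete/continuous entropy difficulty, but you offer no mechanism to resolve it; without steps (i) and (ii) the claimed bound does not follow.
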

\begin{proof}
See Appendix~\ref{sec:append-ZIC-Gaussian-outer1a}.
\end{proof}
\textit{Remarks:}
\begin{itemize}
\item It is easy to show that the outer bounds on the sum rate in Theorem~\ref{th:theorem-gauss-outer1a} is tighter than the outer bound in Theorem~\ref{th:theorem-gauss-outer1} for all values of SNR, INR and $C_G$. Thus, the outer bound in Theorem~\ref{th:theorem-gauss-outer1a} improves over 
Theorem~\ref{th:theorem-gauss-outer1}. 
From the outer bound on the rate of user~$2$ in Theorems~\ref{th:theorem-gauss-outer1} and \ref{th:theorem-gauss-outer1a}, it can be observed that outer bound obtained with secrecy constraint is tighter compared to the outer bound obtained without using the secrecy constraint. 
\item When $C_G =0$, the outer bound on the rate of user~$2$ reduces to $0.5\log \lb 1 + \SNRt - \frac{\SNRt \cdot \INRt}{1 + \SNRt + \INRt}\rb$, as the only possible value $\rho$ can take is $0$. Hence, this outer bound indicates that user~$2$ cannot achieve the maximum possible rate of $0.5\log \lb 1 + \SNRt\rb$. This is in contrast to the deterministic case, where user~$2$ can achieve the maximum rate of $m$, as observed from Theorem~\ref{th:theorem-weakmod-outer1} and Fig.~\ref{fig:ach_scheme}.
\item The outer bound on the sum rate in Theorem~\ref{th:theorem-gauss-outer1} is applicable for all the interference regimes whereas the outer bound in Theorem~\ref{th:theorem-gauss-outer1a} is applicable only in the weak/moderate interference regime. 
\end{itemize}
\subsection{High interference regime $(1 < \alpha < 2)$}
The derivation of the outer bound in this regime is based on the
outer bound in Theorem~\ref{theorem-high-outer1} obtained for the deterministic model. 
In the proof of Theorem~\ref{theorem-high-outer1}, to upper bound the rate of user~$2$, a part of the output at receiver~$1$ which does not contain signal sent by transmitter~$1$ is provided as side-information to receiver~$2$, i.e., $\ybold_{1a}^N$. In the Gaussian case, it is not 
possible to partition the encoded message as it was done for the deterministic model (See Fig.~\ref{fig:highouter}). To overcome this problem, output at receiver~$1$, i.e., $\ybold_1^N$, is provided as side information to receiver~$2$.  Providing side information in this way creates a degraded channel from transmitter~$2$ to receiver~$1$ with respect to the channel from transmitter~$2$ to receiver~$2$. In the deterministic case, to upper bound the sum rate, the output at receiver~$1$ $(\ybold_1^N)$ is partitioned into two parts: $\ybold_{1a}^N$ and  $\ybold_{1b}^N))$, and receiver~$2$ is provided with side information of the form $\ybold_{1a}^N$. To mimic this in the Gaussian case, output of receiver~$2$, i.e., $\ybold_2^N$, is provided as side information to receiver~$1$ and $(W_1, \ybold_1^N)$ is provided as side information to receiver~$2$.  The outer bound on the secrecy capacity region is stated in the following theorem.
\begin{theorem}\label{th:theorem-gauss-outer2}
The secrecy capacity region of the 2-user Gaussian Z-IC with unidirectional transmitter cooperation is upper
bounded as
\begin{align}
R_1 & \leq 0.5\log(1+\text{SNR}), \nonumber \\
R_2 & \leq \displaystyle\max_{- 1 \leq \rho \leq 1}0.5\log\Bigg( 1 + \text{SNR} \nonumber \\
& \quad - \frac{(\rho\text{SNR} + \sqrt{\text{SNR}\cdot\text{INR}})^2}{1+ \text{SNR}+\text{INR}+2\rho\sqrt{\text{SNR}\cdot\text{INR}}}\Bigg), \nonumber 
\end{align}
\begin{align}
R_1 + R_2 &\leq \displaystyle\max_{- 1 \leq  \rho \leq 1} 0.5\log\Bigg( 1 + \text{SNR} + \text{INR} + 2\rho \sqrt{\SNRt \cdot \INRt}\nonumber \\
& - \frac{(\rho\SNRt + \sqrt{\SNRt\cdot\INRt})^2}{1 +\SNRt}\Bigg)     + 0.5\log\Sigma_{\ybold_2|\sbold} + C_G, \label{eq:zic-outer-gaussian6}
\end{align}
where $\Sigma_{\ybold_2|\sbold} \triangleq 1 + \SNRt - \Sigma_{\ybold_2,\sbold}\Sigma_{\sbold, \sbold}^{-1}\Sigma_{\ybold_2, \sbold}^T$, $\Sigma_{\ybold_2, \sbold} \triangleq \lsqb \rho\SNRt \:\:\: \rho\SNRt + \sqrt{\SNRt \cdot \INRt}\rsqb$ and $\Sigma_{\sbold, \sbold} \triangleq $\\
$\left[ \begin{array}{cc}
1 + \SNRt & \SNRt + \rho\sqrt{\SNRt\cdot\INRt} \\
\SNRt + \rho\sqrt{\SNRt\cdot\INRt} & 1 + \SNRt + \INRt + 2\rho\sqrt{\SNRt\cdot\INRt} \end{array} \right]$.
\end{theorem}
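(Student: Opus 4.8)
The plan is to bound $R_1$, $R_2$, and $R_1+R_2$ separately, each time starting from Fano's inequality $NR_i \le I(W_i;\ybold_i^N)+N\epsilon_N$, invoking the weak secrecy constraint $\tfrac1N I(W_2;\ybold_1^N)\to 0$, and feeding carefully chosen genie side information to the receivers so as to reproduce, in the Gaussian setting, the partitioning used for the deterministic bound of Theorem~\ref{theorem-high-outer1}. The final single-letterization is a maximum-differential-entropy argument: because the cooperation link makes $\xbold_1$ and $\xbold_2$ dependent through the Markov chain \eqref{eq:usefulrelation}, the worst case is jointly Gaussian inputs with correlation $\rho\triangleq E[x_1x_2]/P\in[-1,1]$, and the bounds emerge as the resulting conditional variances, maximized over $\rho$. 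Since the cooperative signal is carried over a rate-$C_G$ link, $H(\vtwoone^N)\le NC_G$, and this is the origin of the additive $C_G$ in the sum-rate bound.

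For $R_1$, giving $\xbold_2^N$ to receiver~$1$ removes the interference and leaves a point-to-point Gaussian channel, so $R_1\le 0.5\log(1+\SNRt)$ follows from the standard single-user bound. For $R_2$, the bound coincides exactly with the one in Theorem~\ref{th:theorem-gauss-outer1a}, and I would obtain it the same way: provide $\ybold_1^N$ to receiver~$2$, which together with weak secrecy gives $NR_2\le I(W_2;\ybold_2^N\mid\ybold_1^N)+N\epsilon_N'\le I(\xbold_2^N;\ybold_2^N\mid\ybold_1^N)+N\epsilon_N'$ by data processing (conditioned on $\xbold_2^N$, $\ybold_2^N$ is independent of $(W_2,\ybold_1^N)$). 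Because conditioning on $\xbold_2^N$ reduces $\ybold_2^N$ to noise, single-letterizing and maximizing $h(y_2\mid y_1)$ over Gaussian inputs yields $\mathrm{Var}(y_2\mid y_1)$, whose value with correlation $\rho$ is exactly the argument of the logarithm in the stated $R_2$ bound; maximizing over $\rho\in[-1,1]$ completes it. Providing $\ybold_1^N$ here is precisely what creates the degraded structure from transmitter~$2$ to receiver~$1$ relative to receiver~$2$.

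For the sum rate (the main novelty) I would provide $\ybold_2^N$ to receiver~$1$ and $(W_1,\ybold_1^N)$ to receiver~$2$, then add the two Fano bounds. Using independence of the messages, the decodability of $W_1$ at receiver~$1$ (so $H(W_1\mid\ybold_1^N)\le N\epsilon_N$), and weak secrecy, the term $I(W_2;W_1,\ybold_1^N)$ is negligible, reducing the $R_2$ contribution to $I(W_2;\ybold_2^N\mid W_1,\ybold_1^N)$. Expanding $I(W_1;\ybold_1^N,\ybold_2^N)+I(W_2;\ybold_2^N\mid W_1,\ybold_1^N)$ into differential entropies, the cross term $h(\ybold_2^N\mid W_1,\ybold_1^N)$ cancels, leaving $h(\ybold_1^N,\ybold_2^N)-h(\ybold_1^N\mid W_1)-h(\ybold_2^N\mid W_1,\ybold_1^N,W_2)$; since $\xbold_2^N$ is a function of $W_2$, the last term equals $h(\zbold_2^N)$. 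I would then introduce the cooperative signal $\vtwoone^N$ as an auxiliary, using $H(\vtwoone^N)\le NC_G$ together with \eqref{eq:usefulrelation} to split off the $+C_G$ and to decouple $(W_1,\xbold_1)$ from $(W_2,\xbold_2)$ in the remaining entropies. Finally I would single-letterize and apply the maximum-entropy bound under a second-moment (covariance) constraint, introducing the genie signal $s_1\triangleq h_d x_1+\tilde z_1$ (a noisy copy of transmitter~$1$'s signal, with $\tilde z_1$ a unit-variance Gaussian independent of $z_1$) so that $\sbold=(s_1,y_1)$. The two surviving conditional differential entropies then evaluate to $\mathrm{Var}(y_1\mid y_2)$, the first logarithm in \eqref{eq:zic-outer-gaussian6}, and to $\Sigma_{\ybold_2|\sbold}$, the Schur complement of $y_2$ against $\sbold$ defined in the theorem, both maximized over $\rho\in[-1,1]$.

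The main obstacle is the rigorous treatment of the differential-entropy terms that mix the discrete objects ($W_1$, $\vtwoone^N$, and the artificial-noise/jamming randomness in $\xbold_1$) with the continuous outputs. In particular, lower-bounding $h(\ybold_1^N\mid W_1)$ correctly and extracting \emph{exactly} $C_G$ from $\vtwoone^N$, rather than a loose multiple, hinges on exploiting \eqref{eq:usefulrelation} so that, conditioned on $\vtwoone^N$, the two users' signals are independent. A second delicate point is justifying that jointly Gaussian inputs with a single correlation coefficient $\rho$ are simultaneously worst-case for both surviving entropy terms under the per-symbol power constraint, which is what permits the two maximizations over $\rho$ to be written independently; this is where the covariance bookkeeping producing $\Sigma_{\sbold,\sbold}$, $\Sigma_{\ybold_2,\sbold}$, and the Schur complement $\Sigma_{\ybold_2|\sbold}$ must be carried out explicitly. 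The vanishing Fano and secrecy terms are then collected as $N\to\infty$.
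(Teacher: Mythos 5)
Your bounds on $R_1$ and $R_2$ and your choices of genie information for the sum rate ($\ybold_2^N$ to receiver~1, $(W_1,\ybold_1^N)$ to receiver~2, plus the fresh-noise signal $\tilde{\sbold}_1^N \triangleq h_d\xbold_1^N+\tilde{\zbold}_1^N$) all match the paper. The gap is in the mechanics of the sum-rate bound. After adding the two Fano inequalities you expand jointly and cancel the cross term $h(\ybold_2^N\mid W_1,\ybold_1^N)$, which leaves exactly $h(\ybold_1^N,\ybold_2^N)-h(\ybold_1^N\mid W_1)-h(\ybold_2^N\mid W_1,\ybold_1^N,W_2)$. From here the claimed bound \eqref{eq:zic-outer-gaussian6} cannot be reached: the term $h(\ybold_2^N\mid\ybold_1^N,\tilde{\sbold}_1^N)$, whose Gaussian maximization produces $\Sigma_{\ybold_2|\sbold}$, is a \emph{conditional} entropy, and you cannot insert the conditioning on $\tilde{\sbold}_1^N$ into the positive term $h(\ybold_2^N\mid\ybold_1^N)$ of your expression for free---conditioning only decreases entropy, so that replacement strengthens the claim rather than following from it. Likewise, the only workable lower bound on $h(\ybold_1^N\mid W_1)$ (condition on $\xbold_1^N,\vtwoone^N$ and use \eqref{eq:usefulrelation}) produces $h(h_c\xbold_2^N+\zbold_1^N\mid\vtwoone^N)$, which steers you toward a bound of the type in Theorem~\ref{th:theorem-gauss-outer1a}, not toward \eqref{eq:zic-outer-gaussian6}. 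In short, your identity is algebraically correct but cancels the wrong term, and the leftover $-h(\ybold_1^N\mid W_1)$ has nothing to pair with.

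The paper's proof avoids this by never merging the two mutual-information terms before the genie signals are installed. It first bounds $NR_1 \le h(\ybold_1^N\mid\ybold_2^N) - h(\tilde{\sbold}_1^N\mid W_1,\vtwoone^N)+N\epsilon_N$, using (i) the translation $h(\ybold_1^N\mid\ldots,\xbold_2^N,\ldots)=h(\sbold_1^N\mid\ldots)$, (ii) the replacement of $\zbold_1^N$ by the independent copy $\tilde{\zbold}_1^N$, justified by the invariance of the secrecy capacity region under joint noise distributions with the same marginals \cite{he-CISS-2009}, and (iii) the relation \eqref{eq:usefulrelation} to drop the remaining conditioning. Separately, it bounds $NR_2$ by giving $(\ybold_1^N,W_1,\vtwoone^N,\tilde{\sbold}_1^N)$ to receiver~2, which yields $H(\vtwoone^N)+h(\tilde{\sbold}_1^N\mid\vtwoone^N,W_1)-h(\tilde{\zbold}_1^N)+h(\ybold_2^N\mid\ybold_1^N,\tilde{\sbold}_1^N)-h(\zbold_2^N)+N\epsilon_N$; the price of this genie is the term $+h(\tilde{\sbold}_1^N\mid\vtwoone^N,W_1)$. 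Adding the two bounds, the price cancels exactly against $-h(\tilde{\sbold}_1^N\mid W_1,\vtwoone^N)$ from the $R_1$ bound; this cancellation is the entire purpose of the fresh-noise genie, is what simultaneously generates the $-h(\tilde{\zbold}_1^N)$ and $h(\ybold_2^N\mid\ybold_1^N,\tilde{\sbold}_1^N)$ terms, and is what makes single-letterization possible. Your proposal names the right ingredients but omits this cancellation mechanism, which is the core missing idea. A smaller slip: $\xbold_2^N$ need not be a deterministic function of $W_2$ (stochastic encoding is allowed, and indeed needed, for secrecy), so your claim $h(\ybold_2^N\mid W_1,\ybold_1^N,W_2)=h(\zbold_2^N)$ should be the inequality $h(\ybold_2^N\mid W_1,\ybold_1^N,W_2)\ge h(\zbold_2^N)$, which fortunately points in the direction your bound needs.
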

\begin{proof}
See Appendix~\ref{sec:append-ZIC-Gaussian-outer2}.
\end{proof}
\textit{Remarks:} 
\begin{itemize}
\item When there is no cooperation between the transmitters, the encoded messages at 
the two transmitters are independent of each other. Hence, for the 
non-cooperating case, the outer bound on the rate is obtained by setting $\rho = 0$ in Theorem~\ref{th:theorem-gauss-outer2}.
\item The outer bound in Theorem~\ref{th:theorem-gauss-outer2} is applicable over all the interference regimes. Note that the outer bound in Theorem~\ref{th:theorem-gauss-outer1} is also applicable to the high interference regime.  In the later part of the paper, it is demonstrated that the outer bound in Theorem~\ref{th:theorem-gauss-outer2} is tighter than the outer bound in Theorem~\ref{th:theorem-gauss-outer1} in this interference regime. 
\end{itemize}
\subsection{Relation between the outer bounds for the deterministic and Gaussian models}
In the following, it is shown that, for high $\SNRt$ and $\INRt$, the outer bounds for the Gaussian case in Theorems~\ref{th:theorem-gauss-outer1a}
and \ref{th:theorem-gauss-outer2} are approximately equal to the outer bounds for the deterministic model. For 
ease of presentation, it is assumed that $0.5 \log \SNRt$, $0.5 \log \INRt$, and $C_G$
are integers. Recall that, the parameters $m$, $n$ and $C$ of the deterministic model are related to the Gaussian
model as $m = (\lfloor 0.5 \log \text{SNR}\rfloor)^{+}$, $n = (\lfloor 0.5 \log \text{INR}\rfloor)^{+}$ and $C = \lfloor C_G \rfloor$, 
respectively.
\subsubsection{Weak/moderate interference regime $(0 \leq \alpha \leq 1)$}
It is easy to see that for high SNR and INR $(\text{i.e.}, \SNRt,~\INRt~\gg~1)$, the outer bounds on the individual 
rates in Theorem~\ref{th:theorem-gauss-outer1} can be approximated as 
\begin{align}
& R_1  \leq 0.5 \log(1 + \SNRt) \approx m, \nonumber \\
\text{ and } & R_2  \leq 0.5 \log(1 + \SNRt) \approx m. \label{eq:determin-gaussian1}
\end{align}
When $\SNRt > \INRt$ (i.e., $0 \leq \alpha \leq 1$), the outer bound on the sum rate  in Theorem~\ref{th:theorem-gauss-outer1} is approximated as
\begin{align}
R_1 + R_2 & \leq 0.5\log\lb 1 + \SNRt + \INRt + 2\sqrt{\SNRt \cdot \INRt} \rb \nonumber \\
& \qquad + 0.5\log\lb 1 + \frac{\SNRt}{1+\INRt}\rb + C_G, \nonumber \\
& \approx 2m-n + C.  \label{eq:determin-gaussian2}
\end{align}
From (\ref{eq:determin-gaussian1}) and (\ref{eq:determin-gaussian2}), the outer bound 
derived for the Gaussian case matches with the corresponding outer bound for the deterministic model stated in 
Theorem~\ref{th:theorem-weakmod-outer1}.

In Theorem~\ref{th:theorem-gauss-outer1a}, due to the maximization involved in the outer bound on $R_2$ over $\rho$, $C_G=0$ is considered to simplify the 
exposition. For the non-cooperating case, the outer bound is optimized by setting $\rho=0$. The outer bound on the rate of user~$2$ is approximated as
\begin{align}
R_2 & \leq 0.5\log \lb 1 + \SNRt - \frac{\SNRt \cdot \INRt}{1 + \SNRt + \INRt}\rb, \nonumber \\
& \approx m.  \label{eq:determin-gaussian2d} 
\end{align}
Hence, the outer bound on the rate of user~$2$ is approximately equal to $m$ for high $\SNRt$ and $\INRt$. 

It is also easy to see that, for high SNR and 
INR, the outer bound on the sum rate  in Theorem~\ref{th:theorem-gauss-outer1a} can be approximated as
\begin{align}
R_1 + R_2 & \approx 2m -n + C. \label{eq:determin-gaussian2c}
\end{align}
It can be noticed that the outer bound 
derived for the Gaussian case corresponds to the outer bound for the deterministic model stated in 
Theorem~\ref{th:theorem-weakmod-outer1}. It is interesting to note that both the outer 
bounds on the sum rate in Theorems~\ref{th:theorem-gauss-outer1} and \ref{th:theorem-gauss-outer1a} 
correspond to the outer bound for the deterministic model stated in 
Theorem~\ref{th:theorem-weakmod-outer1} for high SNR and INR. However, as mentioned earlier in the remark to Theorem~\ref{th:theorem-gauss-outer1a}, the 
outer bound in Theorem~\ref{th:theorem-gauss-outer1a} is tighter than
Theorems~\ref{th:theorem-gauss-outer1}. However, for high 
values of SNR and INR, the gap between these two outer bounds decreases and these two outer bounds are approximately equal to each other. 
\subsubsection{High interference regime $(1 < \alpha < 2)$}
In Theorem~\ref{th:theorem-gauss-outer2}, due to the maximization involved in the outer bounds on $R_2$ and $R_1 + R_2$ over $\rho$, $C_G=0$ is considered to simplify the 
exposition. For the non-cooperating case, the outer bound is optimized by setting $\rho=0$. First, the outer bound on the rate of user~$1$ is approximated as
\begin{align}
R_1 \leq & 0.5\log(1 + \SNRt) \approx m. \label{eq:determin-gaussian3}
\end{align}
The outer bound on the rate of user~$2$ is also approximated as
\begin{align}
R_2 & \leq 0.5\log\lb 1 + \SNRt - \frac{\SNRt\cdot\INRt}{1 + \SNRt + \INRt}\rb, \nonumber \\
& \approx 2m-n. \label{eq:determin-gaussian4}
\end{align}
The outer bound on the sum rate becomes
\begin{align}
R_1 + R_2 & \leq 0.5\log \lb1 + \SNRt + \INRt - \frac{\SNRt\cdot\INRt}{1 + \SNRt}\rb \nonumber \\
& + 0.5\log\Sigma_{\ybold_2|\sbold},
 \label{eq:determin-gaussian5}
\end{align}
where with some algebraic manipulation it can be shown that $\Sigma_{\ybold_2|\sbold} = 1 + \SNRt - \Sigma_{\ybold_2,\sbold}\Sigma_{\sbold,\sbold}^{-1}\Sigma_{\ybold_2,\sbold}^T \approx 1$. 
Hence, the sum rate outer bound in (\ref{eq:determin-gaussian5}) reduces to
\begin{align}
R_1 + R_2 \leq m. \label{eq:determin-gaussian7}
\end{align}
From (\ref{eq:determin-gaussian3}), (\ref{eq:determin-gaussian4}), and (\ref{eq:determin-gaussian7}), it can be
observed that the approximated outer bound of Gaussian case in Theorem~\ref{th:theorem-gauss-outer2} matches with the outer bound of deterministic case  in 
Theorem~\ref{theorem-high-outer1} for the high interference regime.

This validates that the approaches used in obtaining outer bounds in the two models are consistent with each other.
\section{Numerical Results and Discussion}\label{sec:results}
In the following sections, some numerical examples are presented for the deterministic and Gaussian cases,
to get insights into the system performance in different interference regimes.
\subsection{Deterministic Z-IC with unidirectional transmitter cooperation}
In Fig.~\ref{fig:capacity_region_m5n3}, the outer bound on the secrecy capacity region given in Theorem~\ref{th:theorem-weakmod-outer1} is plotted for $m=5$, $n=3$ and various values of $C$. The outer bound exactly matches with the lower bound on the secrecy capacity region for the corresponding values of $C$ in \cite{partha-isit-2015}. It is interesting to note that, without cooperation, and under the secrecy constraint at
receiver~$1$, when the rate of user~$2$ is upper bounded by $5$ bits per channel use (bpcu), the rate of user~$1$ is upper bounded by $2$ bpcu, and vice-versa. With further increase in the value of $C$, the outer bound on the sum rate in Theorem~\ref{th:theorem-weakmod-outer1} indicates that the sum rate performance may increase. For $C \geq 3$, the outer bound suggests that both the users may be able to achieve $5$ bpcu, and the achievable result in \cite{partha-isit-2015} establishes this is indeed the case.
\begin{figure}[t]
	\begin{center}
		\includegraphics[width=3.5in, height=3in]{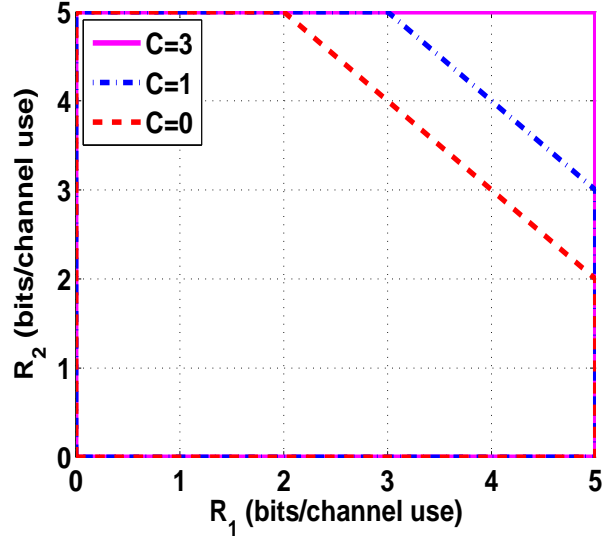}
		\caption{Secrecy capacity region of the deterministic Z-IC with~$(m,n)=(5,3)$. This corresponds to the moderate interference regime.}
		\label{fig:capacity_region_m5n3}
	\end{center}
\end{figure}
In Fig.~\ref{capacity_region_m4n5}, the outer bound on the secrecy capacity region given in Theorem~\ref{theorem-high-outer1} is 
plotted for $m=4$, $n=5$ and various values of $C$. When 
$C=0$ and the rate of user~$1$ is upper bounded by its maximum rate of $m$, i.e., $4$ bpcu, the outer bound establishes that user~$2$ cannot achieve any nonzero 
secrecy rate. When the rate of user~$2$ is upper bounded by $2m-n$, i.e., $3$ bpcu, the rate of user~$1$ is upper bounded by $1$ bpcu. When $C=1$, the outer bound on the sum rate in Theorem~\ref{theorem-high-outer1} suggests that both the users can achieve a nonzero secrecy rate with cooperation, in contrast to the non-cooperating case. The achievable result in \cite{partha-isit-2015} also confirms these observations and establishes the capacity region of the deterministic Z-IC with unidirectional transmitter cooperation and secrecy constraints at the receivers in the high interference regime.
\begin{figure}[t]
	\begin{center}
		\includegraphics[width=3.5in, height=3in]{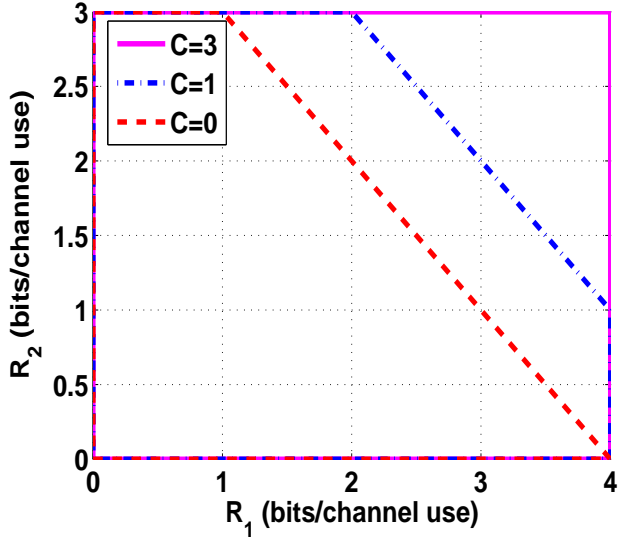}
		\caption{Secrecy capacity region of the deterministic Z-IC with~$(m,n)=(4,5)$. This corresponds to the high interference regime.}
		\label{capacity_region_m4n5}
	\end{center}
\end{figure}
\subsection{Gaussian Z-IC with unidirectional transmitter cooperation}
\begin{figure}[t]
	\begin{center}
		\includegraphics[width=3.5in, height=3in]{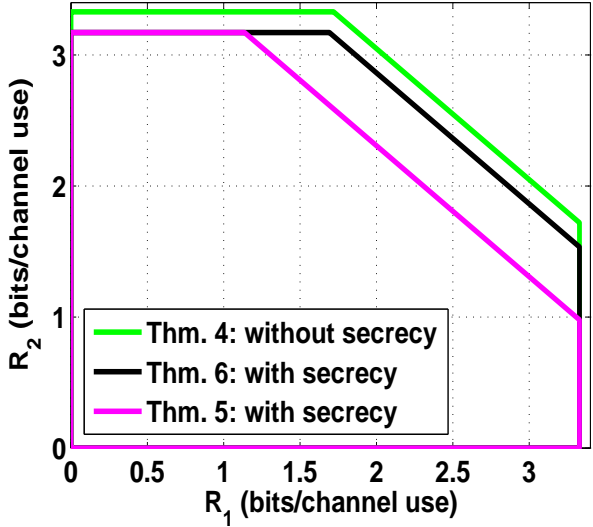}
		\caption{Comparison of the outer bounds on the secrecy capacity region for the Gaussian Z-IC: $P= 100$, $h_d = 1$, 
			$h_c = 0.5$ and $C_G= 0$.}
		\label{fig:outerbound_comp1}
	\end{center}
\end{figure}
In Fig.~\ref{fig:outerbound_comp1}, the outer bounds on the secrecy capacity 
region of the Z-IC in Theorems~\ref{th:theorem-gauss-outer1},~\ref{th:theorem-gauss-outer1a}~and~\ref{th:theorem-gauss-outer2} 
are compared for the weak/moderate interference regime. The outer bound in Theorem~\ref{th:theorem-gauss-outer1a}
is tight as compared to the outer bounds in Theorems~\ref{th:theorem-gauss-outer1} 
and \ref{th:theorem-gauss-outer2} except for the corner points for transmitter~$2$. Recall that, the outer bound in 
Theorem~\ref{th:theorem-gauss-outer1} 
does not use the secrecy constraint at the receiver in its derivation. The outer bound in Theorem~\ref{th:theorem-gauss-outer2}
is derived using the intuitions obtained from  the high interference
regime case considered in the deterministic model for
Theorem~\ref{theorem-high-outer1}. This is reflected in the plot as explained above. 
\begin{figure}[t]
	\begin{center}
		\includegraphics[width=3.5in, height=3in]{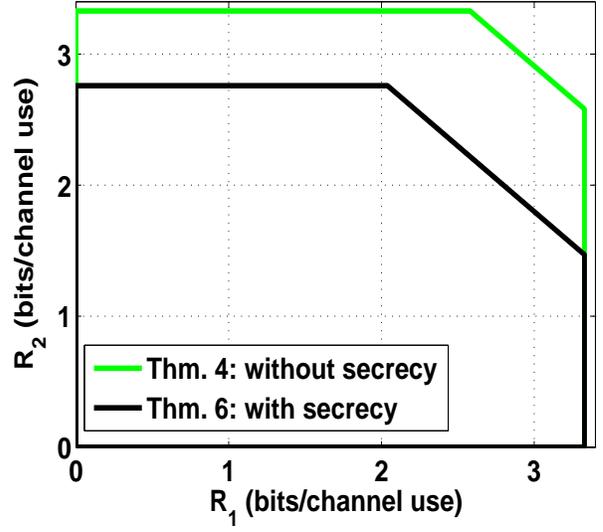}
		\caption{Comparison of the outer bounds on the secrecy capacity region for the Gaussian Z-IC: $P= 100$, $h_d = 1$, 
			$h_c = 1.5$ and $C_G= 1$.}
		\label{fig:outerbound_comp2}
	\end{center}
\end{figure}
In Fig.~\ref{fig:outerbound_comp2}, the outer bound on the secrecy capacity 
region of the Z-IC in Theorems~\ref{th:theorem-gauss-outer1}~and~\ref{th:theorem-gauss-outer2} 
are compared for the high interference regime. From the plot, it can be seen that the proposed outer bound
is tight as compared to the outer bound in 
Theorem~\ref{th:theorem-gauss-outer1}. 
\begin{figure}
	\centering
	\includegraphics[width=3.5in, height=3in]{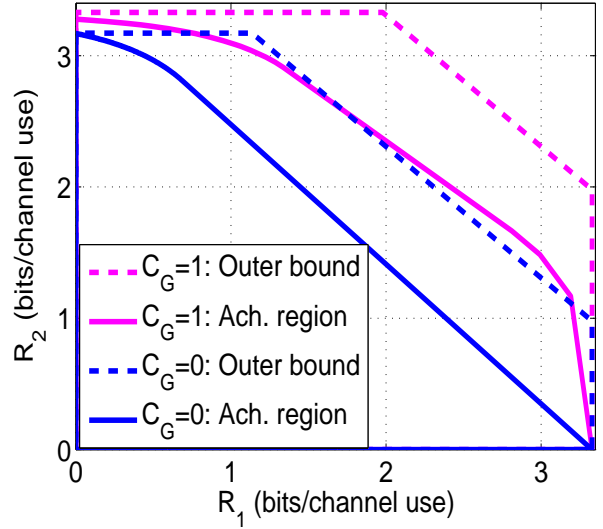}\\
	\caption{Comparison of the outer bounds with the achievable rate region for the Gaussian Z-IC: $P=100$, $h_d=1$ and $h_c=0.5$.}\label{fig:gaussian_result1}
\end{figure}
In Figs.~\ref{fig:gaussian_result1} and \ref{fig:gaussian_result2}, the outer bounds on the secrecy capacity region of the Gaussian Z-IC are plotted for different values of $C_G$ for the weak/moderate and high interference regimes, respectively. As the capacity of the cooperative link increases, the outer bounds indicate that the secrecy capacity region can enlarge in both the cases. This can also be observed from the lower bounds on the secrecy capacity region (curves labeled \texttt{Ach. region}) plotted in these figures using the result in \cite{partha-arxiv-2016, partha-inner-submitted-2016}.
\begin{figure}
	\centering 
	\includegraphics[width=3.5in, height=3in]{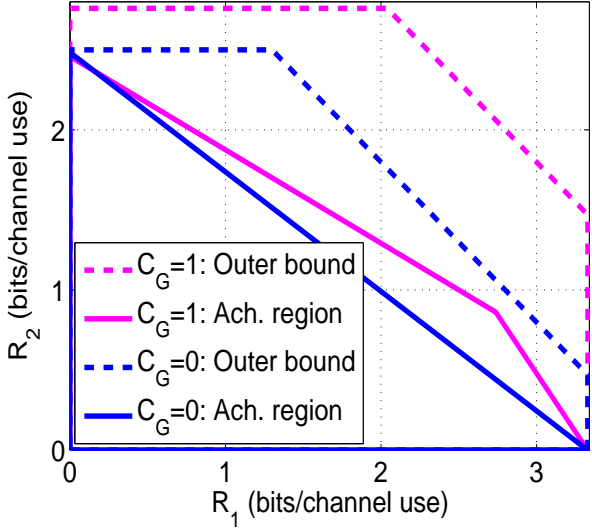}\\
	\caption{Comparison of the outer bounds with the achievable rate region for the Gaussian Z-IC: $P=100$, $h_d=1$ and $h_c=1.5$.}\label{fig:gaussian_result2}
\end{figure}
\section{Conclusions}\label{sec:conclusion}
This work derived outer bounds on the secrecy capacity region of the $2$-user Z-IC with limited-rate unidirectional transmitter cooperation. The outer bounds derived for the deterministic Z-IC model were shown to be tight for all the interference regimes and all possible values of $C$. One of the key techniques used in these derivations was to partition the encoded messages and outputs depending on the value of $\alpha$.  The outer bounds on the secrecy capacity region of the Gaussian Z-IC were derived using the insights obtained from the deterministic model. The outer bounds developed for the deterministic model helped to establish that secrecy can be obtained for free in the weak/moderate interference regime. However, the developed outer bounds suggest that there can be nonzero penalty on the rate of user~$2$ in all the interference regimes for the Gaussian case. The outer bounds also indicate that transmitter cooperation can help improve the performance of the system in the weak, moderate and high interference regimes, for both the models.
\appendix
\subsection{Proof of Theorem~\ref{th:theorem-weakmod-outer1}}\label{sec:append-det-ZIC-outer1}
In the Z-IC model considered in this paper,  there is 
unidirectional cooperation from transmitter~$2$ to transmitter~$1$. Due to this, neither transmitter can aid
in relaying other transmitter's message. Thus, a trivial outer bound on the individual rate of each user is $m$. Hence, it is
only required to establish the bound on the sum rate. Starting from Fano's inequality, the proof goes as follows. 
Receiver~$2$ is provided with side information $(\xbold_{2a}^N,\vtwoone^N)$ by a genie, which leads to further
upper bounding the sum rate. Providing
this side information helps to cancel the negative entropy term in the outer bound for the sum rate,
leading to a tractable outer bound. 
\begin{align}
& N[R_1 + R_2] \nonumber \\
& \leq I(W_1;\ybold_1^N) + I(W_2;\ybold_2^N) + N\epsilon_N, \nonumber \\
& \qquad \qquad \quad \text{ where } \epsilon_N \rightarrow 0 \text{ as }
N \rightarrow \infty, \nonumber \\
&\leq I(W_1;\ybold_1^N) + I(W_2;\ybold_2^N,\xbold_{2a}^N,\vtwoone^N) + N\epsilon_N, \nonumber 
\end{align}
\begin{align}
& = H(\ybold_1^N) - H(\ybold_1^N|W_1) + H(\vtwoone^N) - H(\vtwoone^N|W_2) \nonumber \\
& \quad + H(\xbold_{2a}^N|\vtwoone^N)  -  H(\xbold_{2a}^N|\vtwoone^N,W_2) + H(\ybold_2^N|\xbold_{2a}^N,\vtwoone^N) \nonumber \\
& \quad - H(\ybold_2^N|\xbold_{2a}^N,\vtwoone^N,W_2) + N\epsilon_N, \nonumber 
\\
& \leq H(\ybold_1^N) - H(\ybold_1^N|W_1, \xbold_1^N,\vtwoone^N) + H(\vtwoone^N) - H(\vtwoone^N|W_2)  \nonumber \\
& \quad + H(\xbold_{2a}^N|\vtwoone^N) -  H(\xbold_{2a}^N|\vtwoone^N,W_2)  + H(\xbold_{2b}^N|\xbold_{2a}^N,\vtwoone^t) \nonumber \\
&\quad - H(\xbold_{2b}^N|\xbold_{2a}^N,\vtwoone^N,W_2) + N\epsilon_N, 
\end{align}
where the last step is obtained using the fact that conditioning cannot increase the 
entropy. 

Note that due to cooperation between the transmitters, the encoded messages are dependent, and hence, it is difficult to 
bound or simplify the entropy terms. Here, partitioning of the output $\ybold_1 = (\xbold_{1a},\xbold_{1b} \oplus \xbold_{2a})$
as shown in Fig.~\ref{fig:weakmodouter} helps to simplify the bound further. In 
the following, the fact that removing conditioning cannot decrease the entropy 
has also been used. 
\begin{align}
& N [R_1 + R_2]  \nonumber \\
& \stackrel{(a)}{\leq} H(\ybold_1^N) - H(\xbold_{2a}^N|\vtwoone^N) + H(\vtwoone^N) - H(\vtwoone^N|W_2)  \nonumber \\
& \quad + H(\xbold_{2a}^N|\vtwoone^N) -  H(\xbold_{2a}^N|\vtwoone^N,W_2)  + H(\xbold_{2b}^N) \nonumber \\
&\quad- H(\xbold_{2b}^N|\xbold_{2a}^N,\vtwoone^N,W_2) + N\epsilon_N, \nonumber \\
& \leq H(\ybold_1^N) + H(\vtwoone^N) + H(\xbold_{2b}^N) + N\epsilon_N, \nonumber \\
&\text{or }  R_1 + R_2   \stackrel{(b)}{\leq} 2m-n + C,
\end{align}
where (a) is obtained using the relation in (\ref{eq:usefulrelation}) and (b) is obtained by bounding the entropy
terms $H(\ybold_1)$, $ H(\vtwoone)$ and $H(\xbold_{2b})$ by $m$, $C$ and $m-n$, 
respectively. This completes the proof.
\subsection{Proof of Theorem~\ref{theorem-high-outer1}}\label{sec:append-ZIC-outer2}
As mentioned earlier in the proof of Theorem~\ref{th:theorem-weakmod-outer1}, the  rate of user~$1$ is upper bounded by $m$. The output at receiver~$1$ is partitioned into two parts $\ybold_1^N = (\ybold_{1a}^N, \ybold_{1b}^N)$ as shown in Fig.~\ref{fig:highouter}. 
A genie provides part of 
the output  of receiver~$1$, namely, $\ybold_{1a}^N$, as side-information to receiver~$2$. Note that, $\ybold_{1a}^N$ does not contain any signal sent from transmitter~$1$. Using Fano's inequality the following is obtained
\begin{align}
NR_2  \leq I(W_2;\ybold_{1a}^N) + I(W_2;\ybold_2^N|\ybold_{1a}^N) + N\epsilon_N, \label{eq:th-high-outer2}
\end{align}
Using the secrecy constraint at receiver~$1$, $I(W_2; \ybold_1^N) = I(W_2; \ybold_{1a}^N, \ybold_{1b}^N) \leq N\epsilon_N$, the first term above is upper bounded as $I(W_2;\ybold_{1a}^N) \leq N\epsilon_N$ as mutual information cannot be negative.  Hence, $NR_2 \leq H(\ybold_2^N|\ybold_{1a}^N) + N\epsilon_N$, which can be further upper bounded as $NR_2 \leq H(\xbold_{2b}^N|\xbold_{2a}^N) + N\epsilon_N$. Since, $H(\xbold_{2b})$ is upper bounded by $2m-n$, one obtains $R_2 \leq 2m-n$.

Next, using Fano's inequality and providing $\ybold_{1a}^N$ as side information to receiver~$2$, the sum rate is upper bounded as
\begin{align}
N[R_1 + R_2] \leq I(W_1;\ybold_1^N) + I(W_2;\ybold_2^N, \ybold_{1a}^N) + 
N\epsilon_N. \label{eq:th-high-outer2d}
\end{align}
Using the fact that the encoding at transmitter~$2$ does not depend on $W_1$, it can be seen that $I(W_1;\ybold_{1a}^N,\ybold_{1b}^N) = 
I(W_1;\ybold_{1b}^N|\xbold_{2a}^N)$. The second mutual information term in 
(\ref{eq:th-high-outer2d}) is upper bounded using the relation $I(W_2;\ybold_{1a}^N) \leq N\epsilon_N$. Hence, the outer bound on the sum rate becomes
\begin{align}
& N[R_1 + R_2] \nonumber \\
& \leq  I(W_1;\ybold_{1b}^N|\xbold_{2a}^N) + I(W_2;\ybold_2^N|\ybold_{1a}^N) + N\epsilon_N, \nonumber \\
& \leq H(\ybold_{1b}^N|\xbold_{2a}^N) - H(\ybold_{1b}^N|\xbold_{2a}^N,W_1) + H(\ybold_2^N|\ybold_{1a}^N) \nonumber \\
& \quad -H(\ybold_2^N|\ybold_{1a}^N,W_2) + N\epsilon_N, \nonumber \\
& \stackrel{(a)}{\leq} H(\ybold_{1b}^N|\xbold_{2a}^N) - H(\ybold_{1b}^N|\xbold_{2a}^N,W_1,\xbold_1^N) + H(\ybold_2^N|\ybold_{1a}^N) \nonumber \\
&\quad -H(\ybold_2^N|\ybold_{1a}^N,W_2) + N\epsilon_N, \nonumber \\
& \leq H(\ybold_{1b}^N|\xbold_{2a}^N) - H(\xbold_{2b}^N,\xbold_{2c}^N|\xbold_{2a}^N,W_1,\xbold_1^N,\vtwoone^N)+ H(\xbold_{2b}^N|\xbold_{2a}^N) \nonumber \\
&\quad - H(\xbold_{2b}^N|\xbold_{2a}^N,W_2) + N\epsilon_N, \nonumber 
\end{align}
\begin{align}
& \stackrel{(b)}{=} H(\ybold_{1b}^N|\xbold_{2a}^N) - H(\xbold_{2b}^N,\xbold_{2c}^N|\xbold_{2a}^N,\vtwoone^N)+ H(\xbold_{2b}^N|\xbold_{2a}^N)  \nonumber \\
& \quad - H(\xbold_{2b}^N|\xbold_{2a}^N,W_2) + N\epsilon_N, \nonumber \\
& \leq H(\ybold_{1b}^N) - H(\xbold_{2b}^N|\xbold_{2a}^N,\vtwoone^N) - H(\xbold_{2c}^N|\xbold_{2b}^N,\xbold_{2a}^N,\vtwoone^N) \nonumber \\
& \quad + H(\xbold_{2b}^N,\vtwoone^N|\xbold_{2a}^N)  - H(\xbold_{2b}^N|\xbold_{2a}^N,W_2) + N\epsilon_N, \nonumber \\
& \stackrel{(c)}{\leq} H(\ybold_{1b}^N) - H(\xbold_{2b}^N|\xbold_{2a}^N,\vtwoone^N) - H(\xbold_{2c}^N|\xbold_{2b}^N,\xbold_{2a}^N,\vtwoone^N)  \nonumber \\
& \quad + H(\vtwoone^N) + H(\xbold_{2b}^N|\xbold_{2a}^N,\vtwoone^N) - H(\xbold_{2b}^N|\xbold_{2a}^N,W_2) + N\epsilon_N, \nonumber \\
& \leq H(\ybold_{1b}^N) + H(\vtwoone^N) + N\epsilon_N, \nonumber \\
& \text{or } R_1 + R_2  \leq m + C, \label{eq:th-high-outer3}
\end{align}
where (a) is because conditioning cannot increase the
 entropy; (b) is obtained using the relation in (\ref{eq:usefulrelation}); (c) follows because removing conditioning
 cannot decrease the entropy, and using the chain rule for joint entropy. This completes the proof.
\subsection{Proof of Theorem~\ref{theorem-veryhigh-outer1}}\label{sec:append-ZIC-outer3}
As mentioned in the proof of Theorem~\ref{th:theorem-weakmod-outer1}, the rate of user~$1$ is upper bounded
by $m$. To bound the rate of user~$2$, $\ybold_{1a}^N$ is provided as side-information
to receiver~$2$ as follows and the outer bound is simplified as follows
\begin{align}
NR_2 & \leq I(W_2;\ybold_2^N,\ybold_{1a}^N) + N\epsilon_N, \nonumber \\
& = I(W_2;\ybold_{1a}^N) + I(W_2;\ybold_2^N|\ybold_{1a}^N)+ N\epsilon_N, \nonumber \\
\text{or } R_2 \leq 0,  \label{eq:th-veryhigh-outer2}
\end{align}
where the above equation is obtained using the secrecy constraint at receiver~$1$, i.e., $I(W_2;\ybold_{1a}^N)\leq N\epsilon$ and $I(W_2;\ybold_2^N|\ybold_{1a}^N)= 0$ as 
observed from Fig.~\ref{fig:veryhighouter}. This completes the proof. 
\subsection{Proof of Theorem~\ref{th:theorem-gauss-outer1a}}\label{sec:append-ZIC-Gaussian-outer1a} 
It is easy to see that the rate of transmitter~$1$ is upper bounded by $0.5\log(1+\text{SNR})$. Hence, it is
required to proof the outer bounds on the rate of transmitter~$2$ and the sum rate. Using Fano's inequality,
rate of transmitter~$2$ is upper bounded as follows
\begin{align}
NR_2 & \leq I(W_2;\ybold_2^N) + N\epsilon_N, \nonumber \\
& \leq I(W_2;\ybold_2^N,\ybold_1^N) + N\epsilon_N, \nonumber \\
& = I(W_2;\ybold_1^N)+ I(W_2;\ybold_2^N|\ybold_1^N) + N\epsilon_N, \nonumber \\
& \stackrel{(a)}{\leq} h(\ybold_2^N|\ybold_1^N) - h(\ybold_2^N|\ybold_1^N,W_2) + N\epsilon_N, \nonumber \\
\text{or } R_2 & \stackrel{(b)}{\leq} \displaystyle\max_{0 \leq |\rho| \leq 1}0.5\log\Bigg( 1 + \text{SNR} - \nonumber \\
& \quad  \frac{(\rho\text{SNR} + \sqrt{\text{SNR}\cdot\text{INR}})^2}{1+ \text{SNR}+\text{INR}+2\rho\sqrt{\text{SNR}\cdot\text{INR}}}\Bigg), \label{eq:outertwo-weak-gaussian1}
\end{align}
where (a) is obtained using the secrecy constraint at the receiver~$1$; (b) is obtained using the fact that for
a given power constraint, the differential entropy is maximized by the Gaussian distribution.

In the following, sum rate is upper bounded using  Fano's inequality, secrecy 
constraint at receiver~$1$ and chain rule of mutual information. 
\begin{align}
& N[R_1 + R_2] \nonumber \\
& \leq I(W_1; \ybold_1^N) + I(W_2; \ybold_2^N) - I(W_2; \ybold_1^N) + N\epsilon_N, \nonumber \\
& = I(W_1; \ybold_1^N) + I(W_2; \ybold_2^N) - I(W_2;\ybold_1^N,\sbold_2^N)  \nonumber \\
& \qquad + I(W_2;\sbold_2^N|\ybold_1^N) + N\epsilon_N, \text{ where } \sbold_2^N \triangleq h_c \xbold_2^N + \zbold_1^N. \label{eq:outer-weakmod-append1}
\end{align}
The main novelty in the proof lies in  bounding these mutual information 
terms. To upper bound the sum rate further, consider the first two terms of 
(\ref{eq:outer-weakmod-append1}), where the cooperative signal $\vtwoone^N$ is provided as 
side-information to both the receivers. 
\begin{align}
& I(W_1;\ybold_1^N) + I(W_2;\sbold_2^N|\ybold_1^N) \nonumber \\
& \stackrel{(a)}{\leq} I(W_1;\ybold_1^N|\vtwoone^N) + I(W_2;\vtwoone^N|\ybold_1^N) + I(W_2;\sbold_2^N|\ybold_1^N,\vtwoone^N), \nonumber \\
& \leq I(W_1, \xbold_1^N;\ybold_1^N|\vtwoone^N) + I(W_2;\vtwoone^N|\ybold_1^N) + I(W_2;\sbold_2^N|\ybold_1^N,\vtwoone^N), \nonumber
\\
& \stackrel{(b)}{=} I(\xbold_1^N;\ybold_1^N|\vtwoone^N) + I(W_2;\vtwoone^N|\ybold_1^N) + I(W_2;\sbold_2^N|\ybold_1^N,\vtwoone^N), \nonumber \\
& = I(\xbold_1^N;\ybold_1^N|\vtwoone^N) + H(\vtwoone^N|\ybold_1^N) - H(\vtwoone^N|\ybold_1^N,W_2) \nonumber \\
& \quad + h(\sbold_2^N|\ybold_1^N,\vtwoone^N) - h(\sbold_2^N|\ybold_1^N,\vtwoone^N, W_2), \nonumber
\\
& \stackrel{(c)}{\leq} I(\xbold_1^N;\ybold_1^N|\vtwoone^N) +  H(\vtwoone^N) + h(\sbold_2^N, \ybold_1^N|\vtwoone^N) - h(\ybold_1^N|\vtwoone^N) \nonumber \\
& \quad- h(\sbold_2^N|\ybold_1^N,\vtwoone^N,W_2) , \nonumber
\\
& = I(\xbold_1^N;\ybold_1^N|\vtwoone^N) + H(\vtwoone^N)+ h(\sbold_2^N|\vtwoone^N) + h(\ybold_1^N|\sbold_2^N, \vtwoone^N) \nonumber \\
& - h(\ybold_1^N|\vtwoone^N) - h(\sbold_2^N|\ybold_1^N,\vtwoone^N,W_2), \nonumber \\
\end{align}
where (a) is obtained using the chain rule for mutual information and the fact that $\vtwoone$ is not a function of $W_1$;
 (b) is obtained using the Markov chain relation: $W_1 \rightarrow (\vtwoone, \xbold_1)  \rightarrow \ybold_1$,  
 which can shown using the signal flow graph (SFG) approach in \cite{kramer2008topics};
 (c) follows because removing conditioning cannot decrease entropy and
 $h(\sbold_2^N, \ybold_1^N|\vtwoone^N) = h(\ybold_1^N|\vtwoone^N) + h(\sbold_2^N|\ybold_1^N, 
 \vtwoone^N)$. 
 
Note that the bounding these differential entropy terms in above is difficult as it 
involves continuous and discrete random variables. To overcome this problem, 
using relation in (\ref{eq:usefulrelation}), it can be shown that $h(\sbold_2^N|\vtwoone^N) = h(\sbold_2^N|\vtwoone^N, 
\xbold_1^N)$. This also implies that $h(\sbold_2^N|\vtwoone^N, 
\xbold_1^N) = h(\ybold_1^N|\vtwoone^N, \xbold_1^N)$. This  is one of the key 
steps in the derivation as it leads to cancelation of $I(\xbold_1^N;\ybold_1^N|\vtwoone^N)$ 
as shown below.
\begin{align}
& I(W_1;\ybold_1^N) + I(W_2;\sbold_2^N|\ybold_1^N) \nonumber \\
& \leq I(\xbold_1^N;\ybold_1^N|\vtwoone^N) + H(\vtwoone^N) + h(\sbold_2^N|\vtwoone^N, \xbold_1^N)  \nonumber \\
& \quad + h(\ybold_1^N|\sbold_2^N, \vtwoone^N) - h(\ybold_1^N|\vtwoone^N) - h(\sbold_2^N|\ybold_1^N,\vtwoone^N,W_2), \nonumber 
\\
& = I(\xbold_1^N;\ybold_1^N|\vtwoone^N) + H(\vtwoone^N) + h(\ybold_1^N|\vtwoone^N, \xbold_1^N) \nonumber \\
& \quad + h(\ybold_1^N|\sbold_2^N, \vtwoone^N)  - h(\ybold_1^N|\vtwoone^N) - h(\sbold_2^N|\ybold_1^N,\vtwoone^N,W_2), \nonumber \\
& \stackrel{(a)}{\leq}  I(\xbold_1^N;\ybold_1^N|\vtwoone^N) + NC_G  - I(\xbold_1^N;\ybold_1^N|\vtwoone^N)  + h(\ybold_1^N|\sbold_2^N, \vtwoone^N)  \nonumber \\
& \quad- h(\sbold_2^N|\ybold_1^N,\vtwoone^N, W_2,\xbold_2^N), \nonumber \\
& \stackrel{(b)}{=} NC_G + h(\ybold_1^N|\sbold_2^N, \vtwoone^N)  - h(\sbold_2^N|\ybold_1^N,\vtwoone^N, \xbold_2^N), \nonumber \\
&= NC_G +  h(\ybold_1^N|\sbold_2^N, \vtwoone^N)  -h(\sbold_2^N, \ybold_1^N|\vtwoone^N, \xbold_2^N) \nonumber \\
& \quad + h(\ybold_1^N|\vtwoone^N, \xbold_2^N), \nonumber 
\end{align}
\begin{align}
& = h(\ybold_1^N|\sbold_2^N, \vtwoone^N) - h(\sbold_2^N|\xbold_2^N, \vtwoone^N) - h(\ybold_1^N|\sbold_2^N, \xbold_2^N, \vtwoone^N) \nonumber \\
& \quad + h(\ybold_1^N|\xbold_2^N, \vtwoone^N) + NC_G, \nonumber \\
& \stackrel{(c)}{\leq} h(\sbold_1^N) - h(\zbold_1^N) + N C_G, \text{ where } \sbold_1^N \triangleq h_d\xbold_1^N + \zbold_1^N, \label{eq:outer-weakmod-append2}
\end{align}
where (a) is obtained using the fact that conditioning cannot
 increase the differential entropy and $H(\vtwoone^N) \leq NC_G$; (b) is obtained using the fact that
 $I(W_2;\sbold_2^N|\ybold_1^N,\vtwoone^N, \xbold_2^N) = 0$, which can again be shown with the help of an 
 SFG~\cite{kramer2008topics}; and (c) is obtained by noticing that first and third term cancel with each other
 using the relation in (\ref{eq:usefulrelation}) and using the fact
  that conditioning cannot increase the differential entropy.

Now, consider the bounding of the remaining two terms in (\ref{eq:outer-weakmod-append1}). As it involves difference of
two mutual information terms, it is not straightforward to upper bound these terms. In the weak/moderate interference
regime, the channel from transmitter~$2$ to receiver~$1$ is weaker as compared to the channel from transmitter~$2$ to
receiver~$1$. Hence, $\xbold_2$, $\ybold_2$ and $\sbold_2$ satisfy the following Markov chain: 
$\xbold_2 \rightarrow \ybold_2 \rightarrow \sbold_2$ and this channel can be viewed as a degraded broadcast channel
(BC). Using the result in \cite{li-isit-2008, cheong-TIT-1978}, following bound is obtained. 
\begin{align}
& I(W_2;\ybold_2^N) - I(W_2;\ybold_1^N,\sbold_2^N) \nonumber \\
& = I(W_2;\ybold_2^N) - I(W_2;\sbold_2^N)- I(W_2,\ybold_1^N|\sbold_2^N), \nonumber \\
& \leq I(W_2;\ybold_2^N) - I(W_2;\sbold_2^N), \nonumber \\
& \leq N[I(\xbold_2;\ybold_2) - I(\xbold_2;\sbold_2)], \label{eq:outer-weakmod-append3}
\end{align}

Finally, using (\ref{eq:outer-weakmod-append2}) and (\ref{eq:outer-weakmod-append3}),
(\ref{eq:outer-weakmod-append1}) becomes
\begin{align}
R_1 + R_2 & \leq \log(1 + \SNRt) - 0.5\log(1 + \INRt) + C_G, \label{eq:outer-weakmod-append4}
\end{align}
where the above equation is obtained using the fact that for a given power constraint, Gaussian distribution 
maximizes the differential entropy. This completes the proof. 
\subsection{Proof of Theorem~\ref{th:theorem-gauss-outer2}}\label{sec:append-ZIC-Gaussian-outer2}
As mentioned earlier, rate of transmitter~$1$ is upper bounded by $0.5\log(1+\text{SNR})$. Hence, it is
required to proof the outer bounds on the rate of transmitter~$2$ and the sum rate. Using the steps used to obtain outer bound on the rate of user~$2$ in the proof of Theorem~\ref{th:theorem-gauss-outer1a}, following bound is obtained
\begin{align}
NR_2 &\leq \displaystyle\max_{0 \leq |\rho| \leq 1}0.5\log\Bigg( 1 + \text{SNR} \nonumber \\
& \quad - \frac{(\rho\text{SNR} + \sqrt{\text{SNR}\cdot\text{INR}})^2}{1+ \text{SNR}+\text{INR}+2\rho\sqrt{\text{SNR}\cdot\text{INR}}}\Bigg), \label{eq:outertwo-gaussian1}
\end{align}

The derivation of the outer bound on the sum rate goes as follows. First, an 
outer bound on the rate of user~$1$ is obtained. Then, an outer bound on the 
rate of user~$2$ is derived. Adding these two outer bounds leads to cancelation of 
negative differential entropy terms, which in turn allows to obtain single 
letter characterization of the sum rate outer bound. 

In the following, an outer bound on the rate of user~$1$ is obtained providing $\ybold_2^N$ as side-information to receiver~$1$. 
\begin{align}
NR_1 &  \leq I(W_1;\ybold_1^N, \ybold_2^N) + N\epsilon_N, \nonumber \\
& \stackrel{(a)}{=} I(W_1;\ybold_1^N|\ybold_2^N) + N\epsilon_N, \nonumber \\
& \stackrel{(b)}{\leq}  h(\ybold_1^N|\ybold_2^N) - h(\sbold_1^N|\ybold_2^N, W_1, \xbold_2^N, \vtwoone^N) + N\epsilon_N, \nonumber \\
& \qquad \quad \text{ where } \sbold_1^N \triangleq h_d \xbold_1^N + \zbold_1^N\nonumber \\
& \stackrel{(c)}{\leq}  h(\ybold_1^N|\ybold_2^N) - h(\tilde{\sbold}_1^N|\ybold_2^N, W_1, \xbold_2^N, \vtwoone^N) + N\epsilon_N, \nonumber \\
& \qquad \quad \text{ where } \tilde{\sbold}_1^N \triangleq h_d \xbold_1^N + \tilde{\zbold}_1^N, \nonumber \\
&  \stackrel{(d)}{=}  h(\ybold_1^N|\ybold_2^N) - h(\tilde{\sbold}_1^N|W_1, \vtwoone^N) + N\epsilon_N, \label{eq:outertwo-gaussian2}
\end{align}
where (a) is obtained
using the fact that $\ybold_2^N$ is independent of $W_1$; (b) is obtained using the fact that
conditioning cannot increase the differential entropy; (c) is obtained using the fact that the secrecy capacity
region of Z-IC with confidential messages is invariant under any joint channel noise distribution $P(\zbold_1^N,\zbold_2^N)$
that leads to the same marginal distributions $P(\zbold_1^N)$ and $P(\zbold_2^N)$ \cite{he-CISS-2009}. Although this
invariance property is stated for the Gaussian IC in \cite{he-CISS-2009}, it holds for the Z-IC with limited-rate 
transmitter cooperation also. The need for replacing $\zbold_1^N$ with $\tilde{\zbold}_1^N$ will become clear 
later in the proof. Finally,  (d) is obtained using the relation
 in (\ref{eq:usefulrelation}).

Next, to bound the rate of user~$2$, starting from Fano's inequality, one proceeds as follows. 
The genie provides $(\ybold_1^N, W_1)$ as side-information to receiver~$2$ and 
the sum rate is further upper bounded as follows
\begin{align}
NR_2 \leq I(W_2; \ybold_1^N, W_1) + I(W_2; \ybold_2^N | \ybold_1^N, W_1) + N\epsilon_N. \label{eq:outertwo-gaussian3}
\end{align}
Consider the first term in (\ref{eq:outertwo-gaussian3})
\begin{align}
I(W_2; \ybold_1^N, W_1) & \stackrel{(a)}{\leq} N\epsilon_N + H(W_1|\ybold_1^N) - H(W_1|\ybold_1^N,W_2), \nonumber \\
& \stackrel{(b)}{\leq} N\epsilon_N, \label{eq:outertwo-gaussian4}
\end{align}
where (a) is obtained using the secrecy constraint at receiver~$1$, i.e., $I(W_2;\ybold_1^N) \leq N\epsilon_N$
and (b) is obtained from the reliability condition for message $W_1$, i.e., $H(W_1|\ybold_1^N) \leq N \delta_N$ and dropping the negative entropy term. 
In above, for notational simplicity, $\delta_N$ is absorbed to $\epsilon_N$. 
Using (\ref{eq:outertwo-gaussian4}), (\ref{eq:outertwo-gaussian3}) reduces to 
\begin{align}
NR_2 & \leq I(W_2; \ybold_2^N, \vtwoone^N | \ybold_1^N, W_1) + N\epsilon_N, \nonumber \\
& = I(W_2; \vtwoone^N | \ybold_1^N, W_1) + I(W_2; \ybold_2^N |\vtwoone^N, \ybold_1^N, W_1)+ N\epsilon_N.
\end{align}
To bound the rate of user~$2$ further, $\tilde{\sbold}_1^N$ is included in the 
second mutual information term. In the following, it can be noticed that 
working with $\tilde{\sbold}_1^N$ instead of $\sbold_1^N$ leads to $-h(\tilde{\zbold}_1^N)$ 
instead of $0$. Thus, replacing the noise in $\sbold_1^N$ with an independent noise  leads to a tighter 
outer bound. Hence, the outer bound on $R_2$ becomes
\begin{align}
R_2 & \leq  H(\vtwoone^N|\ybold_1^N, W_1) - H(\vtwoone^N |\ybold_1^N, W_1, W_2) \nonumber \\
& \quad + I(W_2; \ybold_2^N, \tilde{\sbold}_1^N |\vtwoone^N, \ybold_1^N, W_1)+ N\epsilon_N, \nonumber \\
& \stackrel{(a)}{\leq} H(\vtwoone^N) +  I(W_2; \tilde{\sbold}_1^N |\vtwoone^N, \ybold_1^N, W_1) \nonumber \\
& \quad + I(W_2; \ybold_2^N |\vtwoone^N, \ybold_1^N, W_1,  \tilde{\sbold}_1^N) + N\epsilon_N, \nonumber \\
& \stackrel{(b)}{\leq} H(\vtwoone^N)+  h(\tilde{\sbold}_1^N |\vtwoone^N, W_1) - h(\tilde{\sbold}_1^N |\vtwoone^N, \ybold_1^N, W_1, W_2, \xbold_2^N) \nonumber \\
& \quad + h(\ybold_2^N |\ybold_1^N, \tilde{\sbold}_1^N) -  h(\ybold_2^N |\vtwoone^N, \ybold_1^N, W_1,\tilde{\sbold}_1^N, W_2, \xbold_2^N) \nonumber \\
& \quad + N\epsilon_N, \nonumber \\
& = H(\vtwoone^N) +  h(\tilde{\sbold}_1^N |\vtwoone^N, W_1) - h(\tilde{\zbold}_1^N) + h(\ybold_2^N |\ybold_1^N, \tilde{\sbold}_1^N) \nonumber \\
& \quad - h(\zbold_2^N) + N\epsilon_N, \label{eq:outertwo-gaussian5}
\end{align}
where (a) and (b) are obtained using the fact that
removing (or adding) conditioning cannot decrease (or cannot increase) the differential 
entropy.

Adding (\ref{eq:outertwo-gaussian2}) and (\ref{eq:outertwo-gaussian5}), the following is obtained
\begin{align}
 & R_1 + R_2  \nonumber \\ 
 & \leq H(\vtwoone) + h(\ybold_1|\ybold_2) + h(\ybold_2|\ybold_1, \tilde{\sbold}_1) - h(\tilde{\zbold}_1)  -h(\zbold_2), \nonumber \\
 & \leq \displaystyle\max_{0 \leq |\rho| \leq 1} C_G + 0.5\log\Bigg[ 1 + \text{SNR} + \text{INR} + 2\rho\sqrt{\SNRt \cdot \INRt} \nonumber \\
 & \quad- \frac{(\rho\SNRt + \sqrt{\SNRt\cdot\INRt})^2}{1 +\SNRt}\Bigg]   + 0.5\log\Sigma_{y_2|s},  \label{eq:outertwo-gaussian6}
\end{align}
where $\Sigma_{\ybold_2|\sbold}$ is as defined in the statement of the theorem.
The above equation is obtained using the fact that for a given power constraint, the Gaussian distribution maximizes
the conditional differential entropy. The individual terms in the above equations are simplified as follows
\begin{align}
h(\ybold_1|\ybold_2)& = 0.5\log 2\pi e \Sigma_{\ybold_1|\ybold_2}, \label{eq:outertwo-gaussian7} 
\end{align}
where 
\begin{align}
 \Sigma_{\ybold_1|\ybold_2}  & = E[\ybold_1^2] - \frac{E[\ybold_1\ybold_2]^2}{E[\ybold_2^2]}, \nonumber \\
& =  1 + \text{SNR} + \text{INR} + 2\rho\sqrt{\SNRt \cdot \INRt} \nonumber \\
& \quad - \frac{(\rho\SNRt + \sqrt{\SNRt\cdot\INRt})^2}{1 +\SNRt}. \label{eq:outertwo-gaussian8}
\end{align}
The term $\Sigma_{\ybold_2|\sbold}$ is obtained as follows
\begin{align}
\Sigma_{\ybold_2|\sbold} & = E[\ybold_2^2] - E[\ybold_2 \sbold^T] E[\sbold \sbold^T]^{-1} E[\sbold \ybold_2], \nonumber \\
& \qquad \qquad \quad \text{ where } \sbold \triangleq [\tilde{\sbold}_1\:\: \ybold_1]^T, \nonumber \\
& = 1 + \SNRt - \Sigma_{\ybold_2,\sbold}\Sigma_{\sbold, \sbold}^{-1}\Sigma_{\ybold_2, \sbold}^T. \label{eq:outertwo-gaussian9}
\end{align}
In the above equation, the terms $\Sigma_{\ybold_2,\sbold}$ and $\Sigma_{\sbold, \sbold}$ are as defined in the statement of the
theorem. This completes the proof.
\bibliographystyle{IEEEtran}
\bibliography{IEEEabrv,refs_onesidedIC}
\end{document}